\documentclass[pdflatex,sn-mathphys-num]{sn-jnl}

\usepackage{graphicx}%
\usepackage{multirow}%
\usepackage{amsmath,amssymb,amsfonts}%
\usepackage{amsthm}%
\usepackage{mathrsfs}%
\usepackage[title]{appendix}%
\usepackage{xcolor}%
\usepackage{textcomp}%
\usepackage{manyfoot}%
\usepackage{booktabs}%
\usepackage{algorithm}%
\usepackage{algorithmicx}%
\usepackage{algpseudocode}%
\usepackage{listings}%
\usepackage{bbm}%
\usepackage{enumitem}%
\usepackage{caption, subcaption}%

\theoremstyle{thmstyleone}%
\newtheorem{theorem}{Theorem}
\newtheorem{proposition}[theorem]{Proposition}%
\newtheorem{assumption}{Assumption}
\newtheorem{lemma}{Lemma}

\theoremstyle{thmstyletwo}%

\theoremstyle{thmstylethree}%

\newcommand{\bR}{\mathbb R}

\begin{document}

\title[Inferring single-cell trajectories from RNA velocity fields with varifold distances]{VeloTree: Inferring single-cell trajectories from RNA velocity fields with varifold distances}


\author*[1, 2]{\fnm{Elodie} \sur{Maignant}}\email{elomai@dtu.dk}

\author[1]{\fnm{Tim} \sur{Conrad}}\email{conrad@zib.de}

\author[1]{\fnm{Christoph} \spfx{von} \sur{Tycowicz}}\email{vontycowicz@zib.de}

\affil[1]{\orgname{Zuse Institute Berlin}, \orgaddress{\street{Takustraße 7}, \city{Berlin}, \postcode{14195}, \country{Germany}}}
\affil[2]{\orgdiv{Department of Applied Mathematics and Computer Science}, \orgname{Technical University of Denmark}, \orgaddress{\street{Richard Petersens Plads, Building 324}, \city{Kgs. Lyngby}, \postcode{2800}, \country{Denmark}}}


\abstract{Trajectory inference is a critical problem in single-cell transcriptomics, which aims to reconstruct the dynamic process underlying a population of cells from sequencing data. Of particular interest is the reconstruction of differentiation trees. One way of doing this is by estimating the path distance between nodes -- labeled by cells -- based on cell similarities observed in the sequencing data. Recent sequencing techniques make it possible to measure two types of data: gene expression levels, and RNA velocity, a vector that quantifies variation in gene expression. The sequencing data then consist in a discrete vector field in dimension the number of genes of interest. In this article, we present a novel method for inferring differentiation trees from RNA velocity fields using a distance-based approach. In particular, we introduce a cell dissimilarity measure defined as the squared varifold distance between the integral curves of the RNA velocity field, which we show is a robust estimate of the path distance on the target differentiation tree. Upstream of the dissimilarity measure calculation, we also implement comprehensive routines for the preprocessing and integration of the RNA velocity field. Finally, we illustrate the ability of our method to recover differentiation trees with high accuracy on several simulated and real datasets, and compare these results with the state of the art.}

\keywords{single-cell, trajectory inference, RNA velocity, tree inference, varifold distance}



\maketitle

\section{Introduction}\label{sec:introduction}

RNA sequencing is the process of quantifying and identifying RNA molecules in a biological sample. Based on these measurements, it is then possible to estimate what is known as the expression profile of the sample, a vector whose entries encode the expression levels of reference genes in the sample~\cite{kukurba_rna_2015}. Single-cell RNA sequencing (scRNA-seq) enables such vector to be estimated at the scale of each individual cell. Variation in gene expression across a population of cells reflects an underlying dynamic biological process, such as development, or disease progression. It is this process that we are interested in characterizing. 

\subsection{Trajectory inference, a tree inference problem} 
RNA sequencing provides only a snapshot of the cell population at a specific moment in time, and the stage of the process at which each cell was observed at that moment is unknown. To reconstruct the process, one must therefore order the cells along a developmental or temporal trajectory based on similarities in their expression profiles. This is known as \textit{trajectory inference}~\cite{saelens_comparison_2019}. A common assumption in the literature is that the underlying biological process is cellular differentiation, a process during which cells increasingly specialize. In this case, the process can be represented as a tree in which each node corresponds to a stage of the process, and the trajectory inference problem is precisely that of inferring such a differentiation tree from the sequencing data. Most trajectory inference methods specifically designed to infer differentiation trees propose reconstructing the target tree as the minimum spanning tree of the k-nearest neighbors graph of the population of cells, as defined by cell similarities~\cite{street_slingshot_2018, qiu_monocle_2021}. But the resulting tree is very sensitive to noise in the observations. In practice, minimum spanning trees are therefore mainly used as estimators for coarser differentiation trees and must be supplemented by finer estimators such as principal curves~\cite{street_slingshot_2018}. Other methods which do not necessarily assume a differentiation process can be used to identify the branches~\cite{weng_vetra_2021, zhang_cellpath_2021} or the leaves~\cite{setty_palantir_2019, lange_cellrank_2022} of the differentiation tree of a population of cells, and to order cells within each branch (or towards each leaf, respectively). However, it is not clear how to aggregate those orderings so as to recover the differentiation tree. Furthermore, some of these methods also require knowing the number of branches in advance~\cite{weng_vetra_2021, zhang_cellpath_2021}. Beyond transcriptomics, tree inference -- the problem of reconstructing a tree from some observations in order to represent the relationships between the observed objects -- is a central problem in phylogenetics. There is therefore a wide range of methods for this purpose, several of which fall under the approach known as \textit{distance-based tree inference}~\cite{pardi_combinatorics_2012}. The goal of this article is to investigate this approach, which, to the best of our knowledge, has not yet been considered in the context of trajectory inference. In distance-based tree inference, one assumes a dissimilarity measure on the observations -- here, the scRNA-seq data, and then searches for a tree such that the path distance between two nodes coincides with the dissimilarity between the observations at these nodes~\cite{sokal_upgma_1958, saitou_nj_1987, lefort_fastme_2015}. We expect distance-based tree inference to be more robust to noise than approaches based on minimum spanning trees, while being able to operate at a finer scale. The key to implementing this approach is then to define a well-behaved dissimilarity measure on scRNA-seq data. 

\subsection{RNA velocity}
Recent advances in scRNA-seq make it possible to distinguish between spliced and unspliced RNA, the ratio of which can then be used to estimate RNA velocity, a vector that encodes the direction and rate of changes in gene expression levels undergone by each cell at the time of observation~\cite{manno_rna_2018, gorin_rna_2022}. Latest scRNA-seq data consist then of a discrete vector field on a subset of a Euclidean space -- of dimension equal to the number of genes under consideration (Fig.~\ref{fig:velocity}).
\begin{figure}[!h]
\centering
    \includegraphics[width=.7\textwidth]{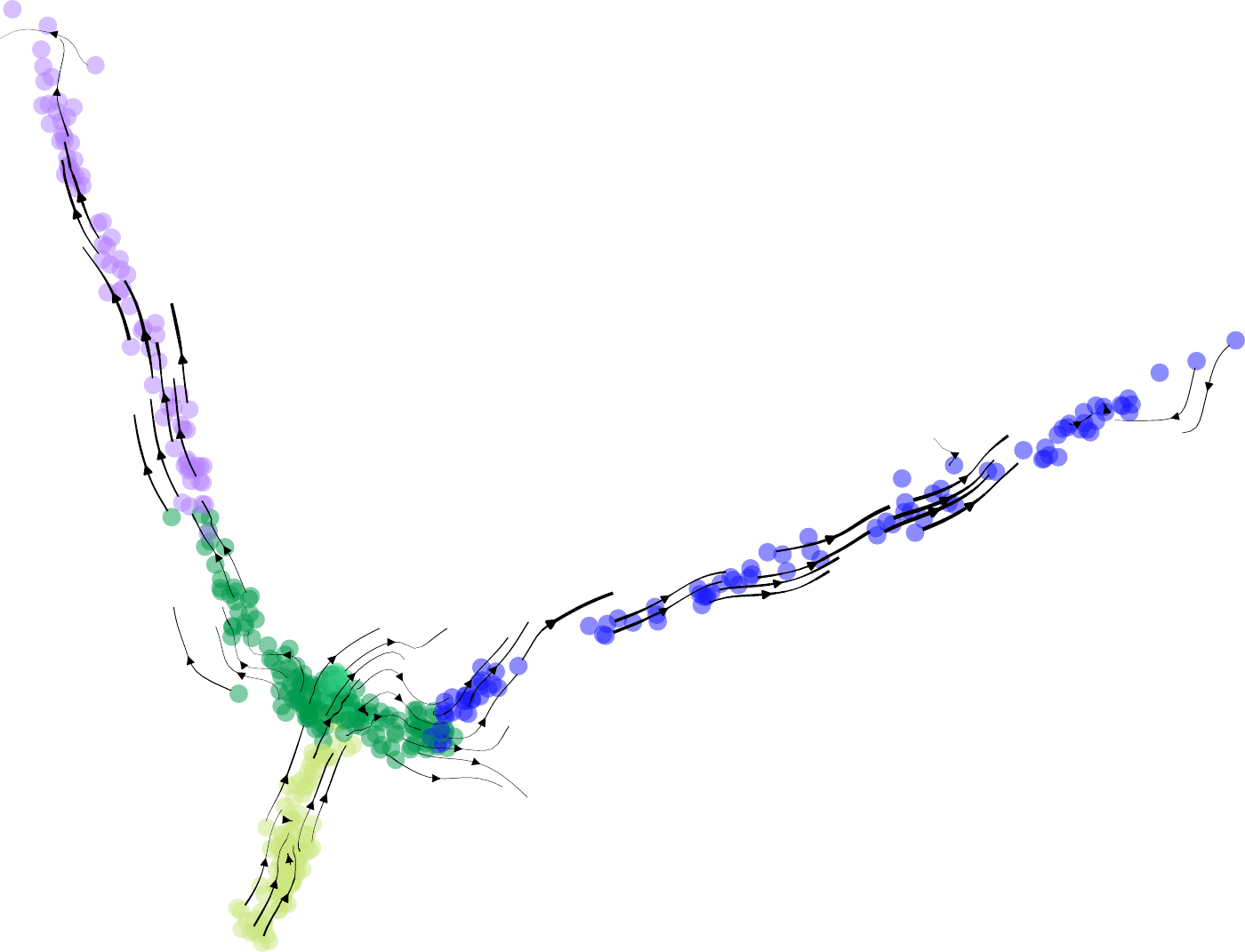}
    \vspace{.6cm}
    \caption{RNA velocity field of a bifurcating trajectory simulated with the \texttt{dyngen} library~\cite{robrecht_dyngen_2021}. The 2D visualization is generated via principal component analysis (PCA), superimposed with the \texttt{streamplot} function of \texttt{matplotlib} for a better rendering of the RNA velocity field. The colors correspond to cell states as labeled by \texttt{dyngen}.}
    \vspace{-.4cm}
    \label{fig:velocity}
\end{figure}

Several trajectory inference methods introduce dissimilarity measures on scRNA-seq data that take into account RNA velocity. A recurring one is a measure of the temporal inconsistency between two neighboring cells based on Pearson correlation coefficient between the RNA velocity of the cell located downstream of the other cell and the displacement vector from the former cell to the latter~\cite{weng_vetra_2021, zhang_cellpath_2021, lange_cellrank_2022}. While such a measure is well suited for gauging the dissimilarity between cells with a temporal relationship, it tends to over-separate cells at the same stage of the process, that is, cells with similar expression profiles and a common RNA velocity orthogonal to the displacement vector from one cell to another. To counterbalance this, VeTra~\cite{weng_vetra_2021} also introduces a second dissimilarity measure, defined as the $L^2$ distance on the product space of expression profiles and RNA velocities. However, we identify a major limitation to using this dissimilarity measure for tree inference, namely that a Euclidean distance cannot accurately capture the hyperbolic nature of the path distance on a tree.

\subsection{Our method}
In this article, we propose a new dissimilarity measure for the comparison of scRNA-seq data, defined as the squared \textit{varifold distance} between the curves obtained by integrating the corresponding RNA velocity field backwards. The varifold distance is a parametric distance introduced by~\citeauthor{kaltenmark_general_2017} for comparing curves from a shape perspective. Intuitively, the integral curves of the RNA velocity field describe for each individual cell the evolution of its gene expression levels from an initial stage to its current stage. Under the assumption that the relative position of such curves in the expression space reflects the hierarchy between the paths from the root to the nodes in the underlying differentiation tree, we show that for well-chosen parameters, the squared varifold distance defines a dissimilarity measure between these curves that is topologically equivalent to the path distance in the target differentiation tree. We develop these results into an end-to-end method for inferring differentiation trees from RNA velocity fields. In particular, we detail processing steps for denoising RNA velocity fields, and combine our dissimilarity measure with a distance-based tree inference method called family-joining~\cite{kalaghatgi_family_2016}. We then apply our method to several synthetic datasets simulated via the \texttt{dyngen} library~\cite{robrecht_dyngen_2021} as well as a RNA sequencing dataset of mouse pancreatic endocrine cells~\cite{bastidasponce_comprehensive_2019}, and we demonstrate the method's ability to recover the underlying differentiation tree with high accuracy. We compare our results with those achieved by two other methods, VeTra~\cite{weng_vetra_2021} and CellPath~\cite{zhang_cellpath_2021}. This article is an extended version of our work~\cite{maignant_tree_2025}, which was limited to the theoretical findings on the dissimilarity measure.

\subsection{Organization of the article}
The article is organized as follows. In Section~\ref{sec:background}, we provide background knowledge on the tools our method relies on, specifically on the varifold distance and on the distance-based tree inference method called family-joining. In Section~\ref{sec:methods}, we describe the workflow of our method and detail the different steps, in particular those related to data processing, as well as how to adjust the different hyperparameters. In Section~\ref{sec:theory}, we derive theoretical guarantees on the accuracy of our method in inferring the true differentiation tree under certain assumptions on the data. Finally, we present in Section~\ref{sec:experiments} several experiments on simulated and real scRNA-seq datasets. 

\section{Background} \label{sec:background}

Our method relies on two existing tools. The first is the varifold distance~\cite{kaltenmark_general_2017} between curves -- in our case the integral curves of the RNA velocity field, which we already presented in~\cite{maignant_tree_2025} and based on which we defined our cell dissimilarity measure. The second is a distance-based tree inference method, family-joining~\cite{kalaghatgi_family_2016}, which allows us to reconstruct the target differentiation tree from the dissimilarity matrix defined. In this section, we give a general understanding of the two tools and motivate these choices for our problem.
    
\subsection{Varifold distance between curves}\label{sec:varifold}
Let us then first recall the definition of the varifold distance. Precisely, we follow the framework by~\citeauthor{kaltenmark_general_2017} known as \textit{oriented varifolds}. The general idea of their approach is to represent curves as distributions over the space of positions and oriented tangent vectors, and then to embed the set of these distributions into a Hilbert space. Via this representation, the space of curves is then equipped with a metric structure and thereby a distance measure, namely the aforementioned varifold distance. The literature proposes numerous other dissimilarity measures between curves, whether in the context of shape analysis or signal processing. The choice of the varifold distance over any other dissimilarity measure is essentially justified by the theoretical guarantees we prove in Section~\ref{sec:theory}. Intuitively, the varifold distance maximizes the separation between divergent curves. Moreover, the varifold distance also offers several other desirable properties, such as being invariant under reparameterization while remaining computationally efficient, and being robust to deformations of the curves under consideration. 
\medbreak  
Let $\gamma$ be a smooth curve of finite length embedded in $\mathbb{R}^n$ and $W$ be a space of test functions on $\mathbb{R}^n \times \mathbb{S}^{n-1}$, the oriented varifold $\mu_\gamma$ associated to $\gamma$ is the element of the dual $W^\ast$ given by 
\begin{equation}
    \mu_\gamma: W \to \mathbb{R}, \;\omega \mapsto \int_\gamma \omega(x,\vec{t}(x)) \;d\ell(x),
\end{equation}
where the unit length vector $\vec{t}(x) \in \mathbb{S}^{n-1}$ encodes the oriented tangent vector to $\gamma$ at $x$. The representation $\mu_\gamma$ depends on the shape and orientation of tangent spaces but not on a choice of a parameterization. Let us also mention the following additivity property: for distinct curves $\gamma_1$ and $\gamma_2$, we have that $\mu_{\gamma_1 \cup \gamma_2} = \mu_{\gamma_1} + \mu_{\gamma_2}$. This property will be useful for our subsequent theoretical results.
\medbreak    
Distances on $W^\ast$ and thus between curves are induced from a \textit{Reproducing Kernel Hilbert Space} structure on the set of test functions $W$ defined by a positive definite kernel $k : (\bR^n \times \mathbb{S}^{n-1}) \times (\bR^n \times \mathbb{S}^{n-1}) \to \mathbb{R}$. In particular, the reproducing property guarantees that evaluation functionals $\delta_{(x,t)}: \omega \mapsto \omega(x,t)$ can be represented in terms of the inner product 
\begin{equation}
    \delta_{(x,t)}(\omega) = \langle k_{(x,t)}, \omega \rangle_W, 
    \vspace{.3cm}
\end{equation}
where $k_{(x,t)}(\cdot) = k((x,t),\cdot)$ is called the reproducing kernel. This allows us to define an inner product on the dual by 
\begin{equation}
    \langle \delta_{(x_1,t_1)}, \delta_{(x_2,t_2)} \rangle_{W^\ast} = \langle k_{(x_1,t_1)}, k_{(x_2,t_2)}\rangle_W = k((x_1,t_1),(x_2,t_2)) ,
\end{equation}
where the last equality follows from the construction of $W$ from $k$. While there is a wide range of feasible kernels, we restrict ourselves to tensor products of Gaussian kernels $k_\sigma(x,y)=e^{-\|x - y\|^2/\sigma^2}$ that are of the form $k((x_1,t_1), (x_2,t_2))=k_{\sigma_x}(x_1,x_2)k_{\sigma_t}(t_1,t_2)$. Then, we write the inner product for oriented varifolds as  
\begin{align}
\langle \mu_{\gamma_1}, \mu_{\gamma_2}\rangle_{W^\ast} &= \Bigl\langle \int_{\gamma_1} \delta_{(x,\vec{t}(x))} d\ell(x), \int_{\gamma_2} \delta_{(y,\vec{t}(y))} d\ell(y)\Bigr\rangle_{W^\ast} \\
&= \iint_{\gamma_1 \times \gamma_2} e^{-\frac{\|x - y\|^2}{\sigma_x^2}} e^{-\frac{\|\vec{t}(x) - \vec{t}(y)\|^2}{\sigma_t^2}} \;d\ell(x)d\ell(y).
\end{align}
The separable structure enhances the interpretability of contributions from spatial and orientation characteristics. Intuitively, $k_{\sigma_x}$ measures proximity between point positions, whereas $k_{\sigma_t}$ quantifies the proximity between the associated tangent spaces. For our choice of kernel, the identification $\gamma \to \mu_\gamma$ is injective such that the restriction of the metric on $W^\ast$ defines a distance on the space of curves:
\begin{equation}
    d_{W^\ast}(\gamma_1, \gamma_2) = \|\mu_{\gamma_1} - \mu_{\gamma_2}\|_{W^\ast}.
\end{equation}
This distance can be efficiently approximated due to closed-form expressions for discrete curves such as polygonal chains and is insensitive to the sampling of discrete curves due to the parameterization invariance. Finally, it is robust with respect to small diffeomorphic deformations. Precisely, for two smooth curves $\gamma_1$ and $\gamma_2$ we have
\begin{equation} \label{eq:robust}
    d^2(\gamma_1, \phi.\gamma_2) \underset{\substack{\vspace{.05cm} \\ \|\phi - id\|_\infty \to 0 \vspace{.05cm} \\ \|d\phi - I\|_F \to 0}}{\to} d^2(\gamma_1, \gamma_2).
\end{equation}
Similar properties have been proven in related literature~\cite{glaunes_these_2005, charlier_fshape_2017}. In practice, for the distance between $\gamma_1$ and $\gamma_2$ to be robust to a deformation $\phi$ of $\gamma_2$, we need to choose $\sigma_x$ significantly greater than $\|\phi - id\|_\infty$ and $\sigma_t$ significantly greater than $\|d\phi - I\|_F$.

\subsection{Distance-based tree inference}\label{sec:fj}
Distance-based tree inference algorithms were initially developed in phylogenetics to reconstruct evolutionary relationships between species, for example using genomic data. Most of these algorithms therefore assume that observations lie only at the leaf nodes of the target tree, in accordance with the hypothesis that the most recent common ancestor of two extant species is always an extinct species and therefore cannot be observed. In the context of transcriptomics, however, we cannot make this kind of hypothesis. Indeed, each cell in a sample evolves at a very different rate, and the typical time of observation is generally less than the duration of the process, so that the cells observed may still be in the early stages of their differentiation. Consequently, we expect our observations to lie at all possible nodes of the target differentiation tree, and not just at the leaf nodes. One algorithm designed to handle this scenario is the method called family-joining~\cite{kalaghatgi_family_2016}. Family-joining is a variant of the better-known neighbor-joining heuristic~\cite{saitou_nj_1987}, which consists of iteratively grouping observations into pairs of adjacent leaves according to their similarity. Family-joining relaxes the requirement that observations be located at leaf nodes, and additionally determines at each iteration whether the two most similar observations should be linked as sibling nodes -- and if so, which observation is their parent -- or as parent and child nodes. This decision is based on the following principle: for two observations $O_i$ and $O_j$ to be in a parent-child relationship, the dissimilarity $\Delta_{ik}$ between the parent observation $O_i$ and any other third observation $O_k$ must be equal to the sum $\Delta_{ij}+\Delta_{jk}$ of the dissimilarity between the parent observation and the child observation, and the dissimilarity between the child observation and the third observation. We implement the pseudocode provided along with this algorithm in~\cite{kalaghatgi_family_2016}. Please note that the original algorithm also involves introducing an unobserved node as the parent of two sibling nodes in cases where no observation would be suitable. In our implementation, we exclude this option and systematically assign the best candidate observation to the parent node of the two sibling nodes. Indeed, given the large number of observations available in our context, we can assume that there are no unobserved nodes. In this way, we eventually reconstruct the smallest tree possible. 

\section{Methods}\label{sec:methods}
Building on the tree inference tools described in the previous section, our method incorporates two initial steps: preprocessing the velocity field, and integrating it. Below, we present the full pipeline and detail how to select or tune the parameters involved in each step of the algorithm.

\subsection{Algorithm}
The input data consists of the expression profiles $x_1, \dots, x_N \in \bR^n$ and the RNA velocities $v_1, \dots, v_N \in \bR^n$ of $N$ cells. Upstream of the analysis, we reduce the dimension of the data from $n$ to $d < 30$ in order to retain only the significant variability and reduce the computational load. Our method then comprises four main steps:
\vspace{-.2cm}
\begin{enumerate}[label={}]
    \item\textbf{Step\;1.\;Denoise the RNA velocity field} by first smoothing it and then projecting it back onto the local tangent spaces to the point cloud described by the expression profiles. 
    \medbreak
    \item\textbf{Step\;2.\;Integrate the RNA velocity field} backward and recover for each cell the integral curve $\gamma_i$ defined as the solution of the differential equation
    \begin{equation}
       \dot\gamma(t) = - \sum_{i=1}^N\frac{K(\gamma(t), x_i)}{\sum_{j=1}^N K(\gamma(t), x_j)}v_i
    \end{equation}
    for the initial condition $\gamma_i(0) = x_i$ and where $K$ is a smoothing kernel that allows to derive a smooth velocity field on $\bR^d$ from the RNA velocity field.
    \medbreak
    \item\textbf{Step\;3.\;Compute the dissimilarity matrix} defined as the squared varifold distance between the integral curves obtained at step 2:
    \begin{equation}
        \Delta_{ij} =  d_{W^\ast}(\gamma_{i}, \gamma_{j})^2.
    \end{equation}
    \medbreak
    \item\textbf{Step\;4.\;Infer the differentiation tree} from $\Delta$ using family-joining (see Section~\ref{sec:fj}). 
\end{enumerate}

\subsection{Dimensionality reduction}
The raw expression profiles $x_1, \dots, x_N \in \mathbb{R}^n$ and the RNA velocities $v_1, \dots, v_N \in \mathbb{R}^n$ are projected onto the first $d$ principal components of the $n$-dimensional point cloud described by $x_1, \dots, x_N$. The number $d$ of components is an approximate estimate of the number of significant components in terms of the fraction of the variance in $x_1, \dots, x_N$ they explain (as defined by PCA). Note that we denote indistinctly by  $x_1, \dots, x_N \in \mathbb{R}^d$ and $v_1, \dots, v_N \in \mathbb{R}^d$ the data after dimensionality reduction.

\subsection{Preprocessing of the RNA velocity field}
RNA velocity estimation is highly sensitive: on the one hand, technical variability, dropouts, and ambient RNA can distort spliced/unspliced counts that are the basis for velocity models; on the other hand, kinetic assumptions such as constant transcription, splicing, and degradation rates are frequently violated in heterogeneous or branching systems. We therefore perform a denoising preprocessing step.

\subsubsection{Smoothing}
Ideally, we would like to improve the signal-to-noise ratio by smoothing the velocity field, that is averaging the velocities of cells that are at the same stage of the differentiation process. In lieu of such information -- the trajectory information, we formulate the smoothing in terms of diffusion processes that integrate the local geometry of the point cloud $x_1, \dots, x_N \in \bR^d$.
Precisely, we employ the heat kernel construction from the popular \textit{diffusion maps}~\cite{coifman_diffusion_2026} method, which recovers the geometry of the data without assuming the samples to be uniformly distributed. A critical parameter in the construction is the diffusion time $t_{\text{diff}}$ that can be hard to tune. We determine $t_{\text{diff}}$ using the semigroup criterion proposed by \citeauthor{shan_diffusion_2022}. A smoothed velocity field is then obtained by applying the heat operator to $v_1, \dots, v_N \in \bR^d$ (Fig.~\ref{fig:smoothed}).
\begin{figure}[!h]
\centering
    \includegraphics[width=.7\textwidth]{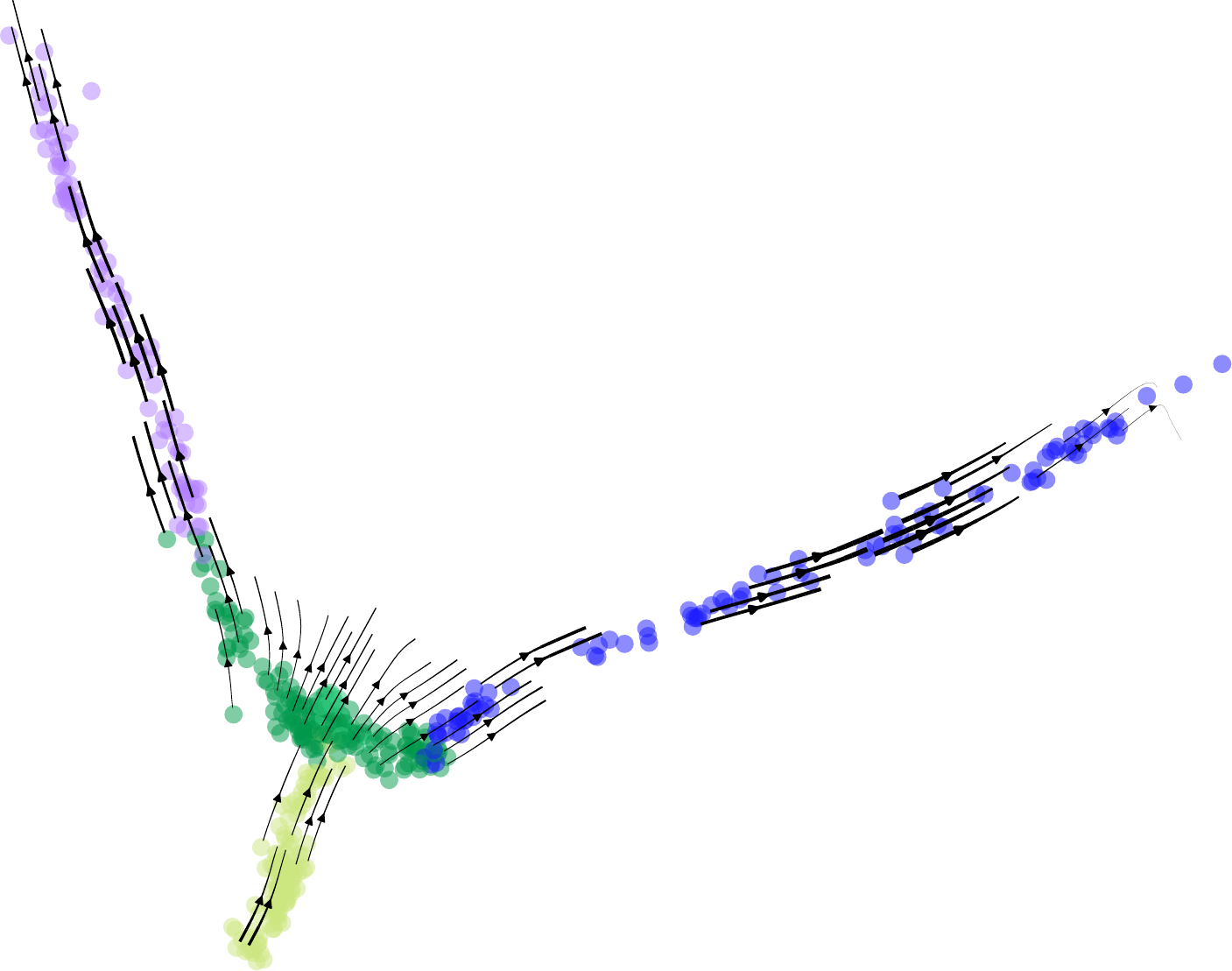}
    \vspace{.5cm}
    \caption{Smoothed RNA velocity field of the bifurcating trajectory pictured in Fig.~\ref{fig:velocity}.}\label{fig:smoothed}
\end{figure}

\subsubsection{Tangent projection}\label{sec:projection}
We further improve the consistency of the RNA velocities by projecting them onto the tangent bundle to the point cloud $x_1, \dots, x_N \in \bR^d$ (Fig.~\ref{fig:projected}). Given a radius $r$, for each cell $i$ we estimate the tangent space at $x_i$ as the span of the principal components derived from the covariance of the $r$-ball neighborhood of $x_i$. The number $d_i$ of components used, hence the dimension of the estimated tangent space at $x_i$, is determined by the maximal variance gap, viz. $d_i=\operatorname{argmax} \lambda_d / \lambda_{d+1}$, following the same criterion as in \cite{ahn_eigenvalue_2013}. The radius $r$ is chosen small enough to capture the local geometry of the point cloud  $x_1, \dots, x_N \in \bR^d$ while exceeding the noise level. In the case of differentiation trajectories, we expect the point cloud $x_1, \dots, x_N \in \bR^d$ to be locally $1$-dimensional. In practice, we do observe that beyond a certain radius -- which corresponds to the noise level, the average local dimension decreases as the radius of the neighborhoods increases, reaching values close to $1$, before increasing again (as we exit the local scale and begin to detect the global geometry of the point cloud). It is the intermediate regime that interests us, and we thus choose the radius $r$ that minimizes the average local dimension. To improve the robustness, the neighborhoods do not include the cells they are constructed for and are further augmented by nearest neighbors of distance larger than $r$ until a minimal cardinality $k$ is reached. Such a threshold $k$ is chosen so as to optimize the convergence of the integral curves of the velocity field, meaning to minimize the distance between those at the final integration step.
\begin{figure}[!h]
\centering
    \includegraphics[width=.7\textwidth]{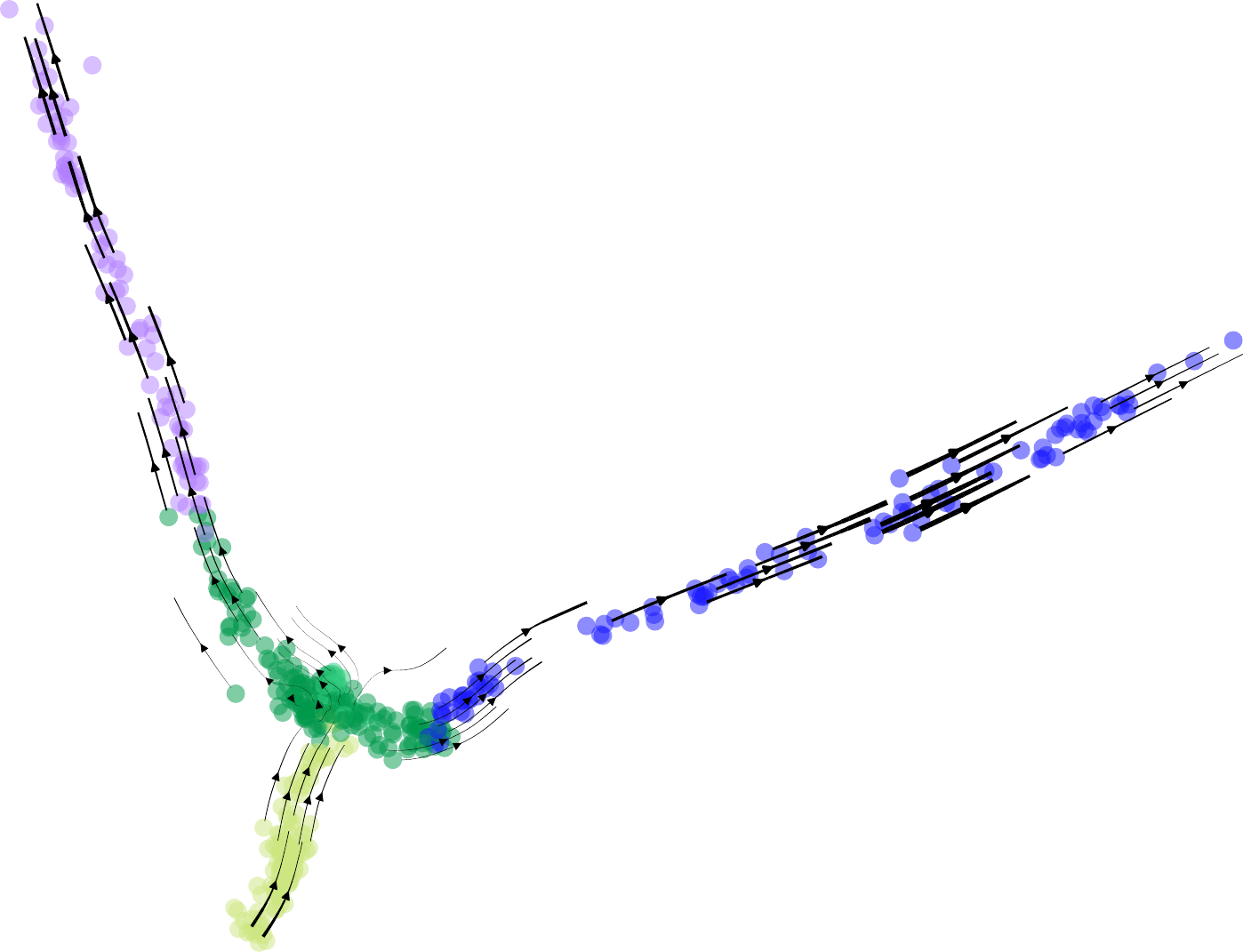}
    \vspace{.6cm}
    \caption{RNA velocity field of the bifurcating trajectory after projection.}\label{fig:projected}
    \vspace{-.9cm}
\end{figure}

\subsection{Integrating the RNA velocity field}
We apply kernel regression to the preprocessed RNA velocity field \cite{nadaraya_estimating_1964, watson_smooth_1964} in order to recover a smooth velocity field defined at every point in $\bR^d$, which can then be integrated (Fig.~\ref{fig:integration}). For the choice of the smoothing kernel $K$, we opt for a Gaussian kernel with a bandwidth equal to $r/3$, so that the typical smoothing scale is the same as the size of the neighborhoods estimated in the previous step. The integration routine we implement requires setting three additional parameters: the integration step size, the discretization step size, and the integration “margin”. The discretization step determines the number of time points output for each integral curve of the velocity field. It is chosen such that the longest integral curve consists of approximately 100 time points -- sampled uniformly. The number of time points represents a trade-off between accurately describing the shape of the integral curves and maintaining a reasonable runtime for computing the dissimilarity matrix. The integration step size is then chosen to be 5 times smaller than the discretization step size. The integration margin corresponds to the distance between the boundary of the integration domain and the point cloud $x_1, \dots, x_N \in \bR^d$. Exiting the integration domain serves as a stopping criterion for the integration. The integration margin is chosen large enough so that most integral curves would have converged at the final integration step, which should be the case asymptotically assuming that all cells observed in the dataset originate from a common root cell.
\begin{figure}[!h]
\centering
    \includegraphics[width=.7\textwidth]{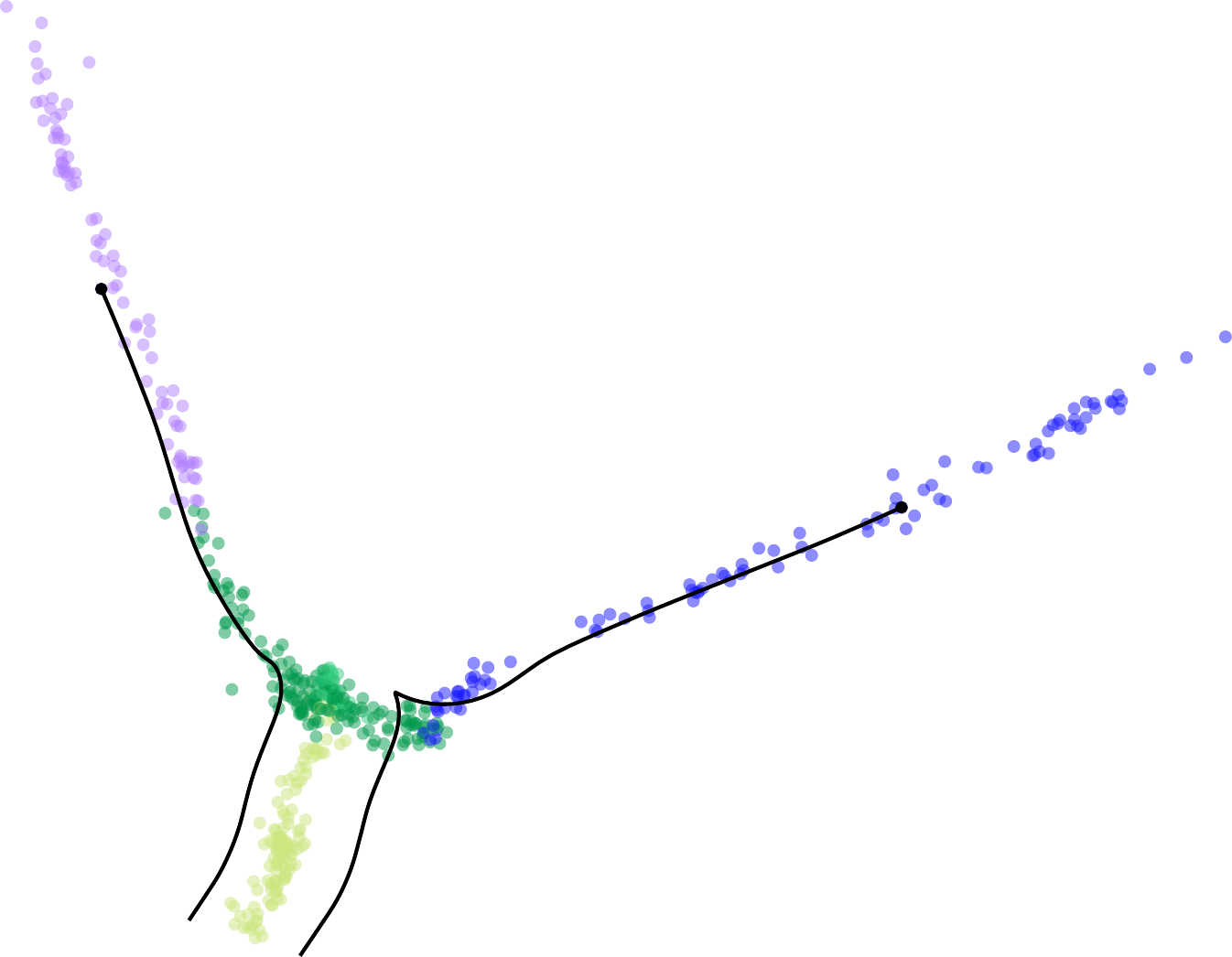}
    \vspace{.6cm}
    \caption{Two branching integral curves of the velocity field of the bifurcating trajectory.}\label{fig:integration}
    \vspace{-.9cm}
\end{figure}

\subsection{Parameter tuning for the dissimilarity matrix}
There are two parameters to tune for the dissimilarity matrix: the spatial sensitivity $\sigma_x$ and the angular sensitivity $\sigma_t$ of the varifold distance. We set the former to a value large enough for the integral curves of the velocity field to be perceived as overlapping during the latest integration steps, specifically to $10$ times the $L^2$ distance between the two furthest curves at the final integration step. As for the angular sensitivity, there is a trade-off between the dissimilarity measure being robust to noise on the integral curves, which requires that the angular sensitivity dominates the angular noise as described by Equation \ref{eq:robust}, and the accurate detection of bifurcation points, which, as we show in Section \ref{sec:theory}, is the case when the angular sensitivity -- and the spatial sensitivity -- tends to $0$. To this end, we estimate the average maximal $L^2$ distance between two branching integral curves (Fig.~\ref{fig:integration}) and that between two non-branching integral curves, and we set $\sigma_t$ such that the corresponding Gaussian kernel best separates the two average distances.

\subsection{Parameter selection for the tree reconstruction}
Family-joining requires defining a threshold above which the criterion that determines whether two nodes are in a parent-child relationship (see Section \ref{sec:fj}) is considered satisfied. We simply set this threshold systematically to a tenth of the average computed dissimilarity. Intuitively, a threshold that is too high results in missing branches, while a threshold that is too low leads to excessive branching. 

\section{Theoretical guarantees}\label{sec:theory}

In this section, we consider a simplified version of our tree inference problem, in which we assume a tree that is faithfully embedded in $\bR^d$, which we observe as the curves representing the paths from the root to the nodes. By placing ourselves in this ideal case, we derive theoretical guarantees on the property of the dissimilarity matrix $\Delta$ to characterize the target differentiation tree.

\subsection{Assumptions}
Let $T = (V, E, r)$ be a rooted tree embedded into $\bR^d$ via the two injective maps
\begin{equation}
    \begin{array}{ccl}
        i \in V \mapsto x_i \in \bR^n
    \end{array} 
    \qquad \text{and} \qquad
    \begin{array}{ccl}
        (i, j) \in E \mapsto \gamma_{x_i, x_j} \subset \bR^d,
    \end{array} 
\end{equation}
where $\gamma_{x_i, x_j}$ denotes a smooth curve of $\bR^d$ connecting $x_i$ and $x_j$ (Fig.~\ref{fig:tree}). 
\begin{figure}[t]
\centering
\includegraphics[width=.7\textwidth]{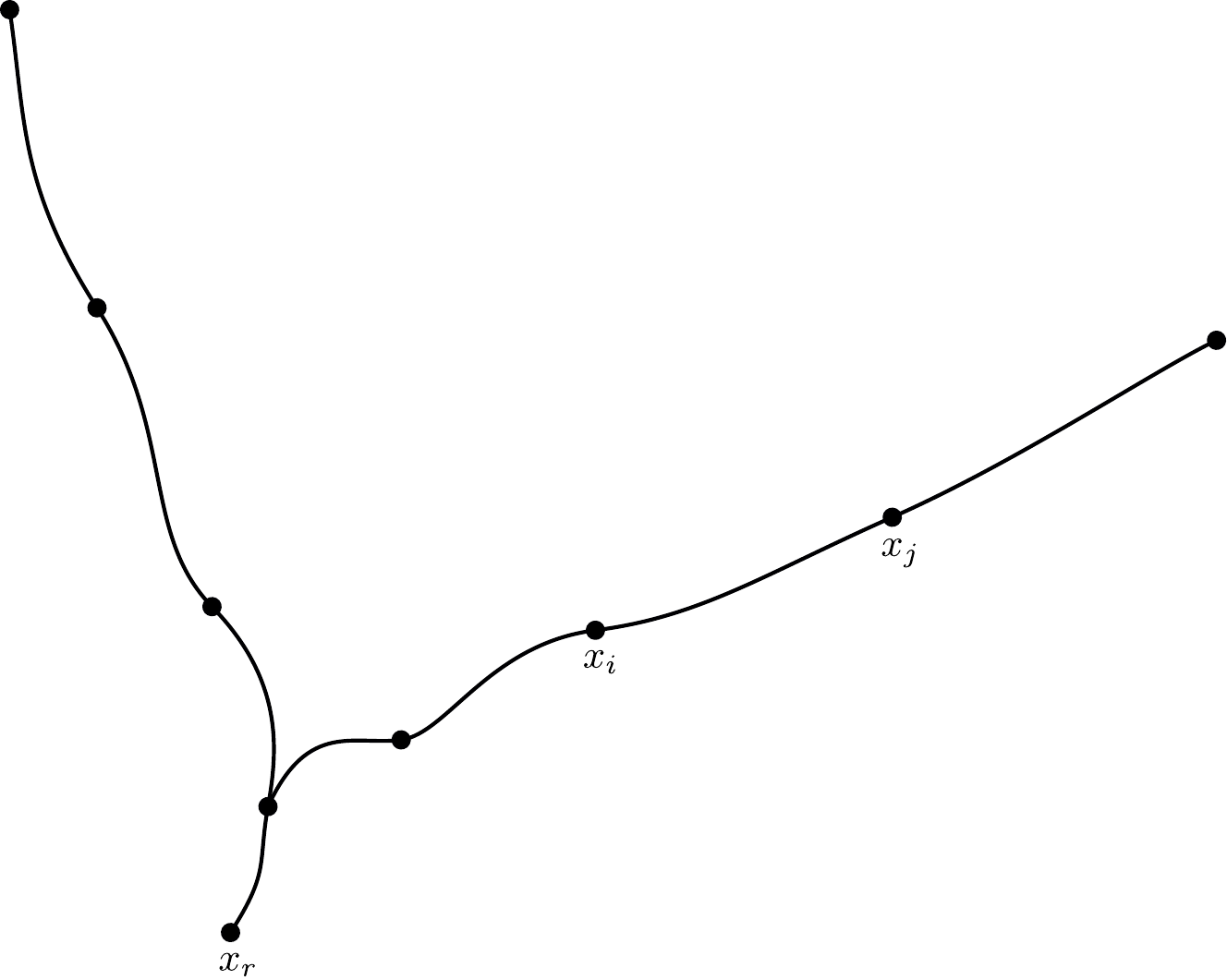}
\vspace{.6cm}
\caption{A rooted tree embedded in the plane.}\label{fig:tree}
\vspace{-.3cm}
\end{figure}
We ask for the embedding to preserve the topology of the tree, that is for the respective embeddings of two edges to intersect only if the edges share a common node, and only at the image of this common node. Additionally, we require that the embedding of the path from the root $r$ to any node is also a smooth curve, and we extend the notation $\gamma_{x, y}$ to the curve segment joining any two points $x$ and $y$ on the embedding of a common path from the root to some node. In particular, the map
\begin{equation}
    i \in V \to \gamma_i \triangleq \gamma_{x_r, x_i} \subset \bR^d
\end{equation}
should denote the embedding of the path from the root $r$ to the node $i$. Finally, we assume that the embedding of any such path is a regular and orientable curve, so that it defines an oriented varifold~\cite{kaltenmark_general_2017}. A natural orientation would be, for example, from the root to the end node. We are then interested in reconstructing $T$ based on dissimilarities between the curves $\gamma_1, \dots, \gamma_N$. To ensure that the $d$-dimensional embedding of $T$ reflects its topology well enough, we make three further assumptions.
\vspace{-.4cm}
\begin{assumption}
\label{A1} 
The distance between two adjacent edges of the same path is increasing as they move further from their common node. Let $i, j, k$ be three nodes of $T$ such that $i < j < k$ meaning $i$ is the parent node of $j$ which is itself the parent node of $k$. Then for all $x, x' \in \gamma_{x_i, x_j}$ and $y, y'\in \gamma_{x_j, x_k}$ we have that
\begin{equation*}
    \ell(\gamma_{x, y}) \leq \ell(\gamma_{x', y'}) \Rightarrow \|x - y\| \leq \|x' - y'\|
\end{equation*}   
where $\ell(\gamma)$ denotes the arc length of the curve $\gamma$. \\
\begin{figure}[!h]  
    \centering
    \vspace{-.3cm}
    \includegraphics[width=.7\textwidth]{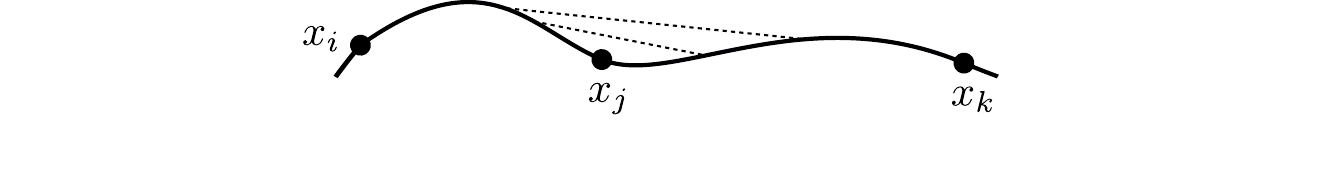} 
    \vspace{-1.4cm}
    \label{fig:hyp1}
\end{figure}
\end{assumption}

\begin{assumption}
\label{A2}
The angle between two adjacent edges of branching paths is increasing as they move further from their common node. Let $i, j, k$ be three nodes of $T$ such that $i < j$ and $i<k$. Then for all $x, x' \in \gamma_{x_i, x_j}$ and $y, y'\in \gamma_{x_i, x_k}$ we have that
\begin{equation*}
\ell(\gamma_{x_i, x}) +  \ell(\gamma_{x_i, y}) \leq \ell(\gamma_{x_i, x'}) + \ell(\gamma_{x_i, y'})\Rightarrow \|\vec{t}(x) - \vec{t}(y)\| \leq \|\vec{t}(x') - \vec{t}(y')\|
\end{equation*}  
where we recall that $\vec{t}(\cdot)$ denotes the unit oriented tangent vector to the curve it is evaluated~on.
\end{assumption}
\begin{figure}[!h]
    \centering
    \vspace{-.6cm}
    \includegraphics[width=.7\textwidth]{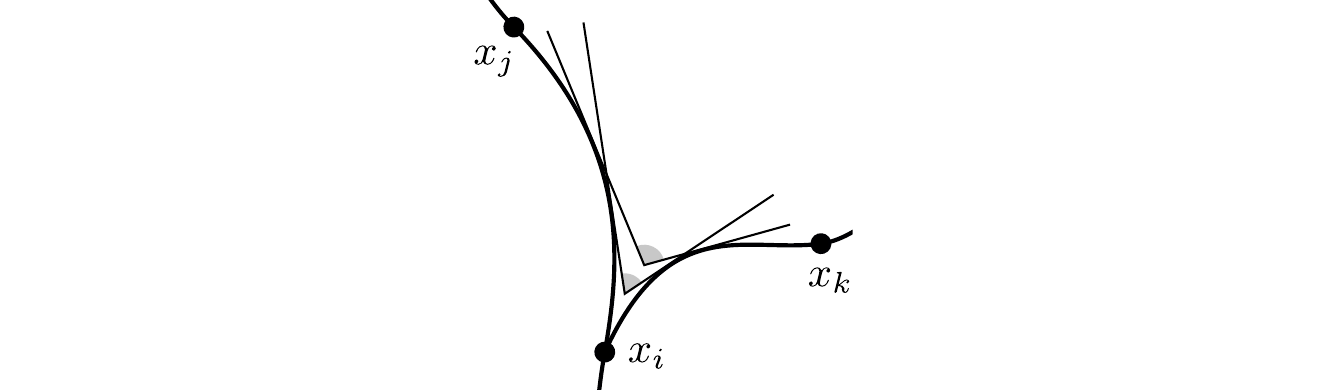}
    \vspace{-.8cm}
    \label{fig:hyp2}
\end{figure}

\begin{assumption}
\label{A3}
Branching paths deviate sufficiently fast from their initial direction. Let $i, j, k$ be three nodes of $T$ such that $i < j$ and $i<k$. Then there exists some neighborhood $U_i$ of $x_i$ and $0 < a < 2$ such that for all $x \in U_i \cap \gamma_{x_i, x_j}$ and all $x \in U_i\cap\gamma_{x_i, x_k}$ we have
\begin{equation*}
    \|\vec{t}(x) - \vec{t}(x_i)\| \geq \ell(\gamma_{x_i, x})^a.
\end{equation*}
\end{assumption}
The first two assumptions are quite natural and essentially state that we can infer from the embedding whether three consecutive nodes are aligned along a common path originating from the root or not. The last assumption is stronger than the second, primarily technical, and guarantees that we can accurately identify branch nodes on the basis of the tangent vectors alone. Ultimately, the three assumptions are only really necessary when the data are noisy and the tree looks more like Fig.~\ref{fig:integration} than Fig.~\ref{fig:tree}.

\subsection{Results}

The dissimilarity measure between the nodes of the tree $T$ is defined as
\begin{equation*}
    \Delta_{ij} = d_{W^\ast}(\gamma_i, \gamma_j)^2
\end{equation*}
    where we recall that $\gamma_i$ and $\gamma_j$ denote the embedding in $\bR^n$ of the paths in $T$ from the root $r$ to nodes $i$ and $j$ respectively. Then $\Delta$ approximates the shortest-path distance on a weighted tree $T'$ that is isomorphic to $T$ in the following sense:
\begin{proposition}
    \label{prop:shortest_path_distance}
    Let $i, j$ be two nodes of $T$. Let $i=i_1,\dots,i_N=j$ denote the shortest path from $i$ to $j$. Then we have the equivalence
\begin{equation}
    \Delta_{ij} \underset{\substack{\vspace{.05cm} \\ \sigma_x, \sigma_t \to 0 \vspace{.1cm} \\ \sigma_x \asymp \sigma_t}}{\sim} \sum_{k=1}^{N-1}\Delta_{i_k i_{k+1}}
\end{equation}
    where $\sigma_x \asymp \sigma_t$ means here that there exist $k_1, k_2 > 0$ such that $k_1 \sigma_x \leq \sigma_t \leq k_2 \sigma_x$.
\end{proposition}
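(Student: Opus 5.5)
The plan is to reduce to two configurations using the additivity of oriented varifolds, and then compute asymptotics of kernel double integrals as $\sigma_x,\sigma_t\to 0$ with $\sigma_x\asymp\sigma_t$. Let $m$ be the most recent common ancestor of $i$ and $j$. Then $\gamma_i=\gamma_m\cup\gamma_{x_m,x_i}$ and $\gamma_j=\gamma_m\cup\gamma_{x_m,x_j}$, and these are unions of curves meeting only at $x_m$. Additivity gives $\mu_{\gamma_i}-\mu_{\gamma_j}=\mu_{\gamma_{x_m,x_i}}-\mu_{\gamma_{x_m,x_j}}$, hence
\begin{equation*}
\Delta_{ij}=\|\mu_{\gamma_{x_m,x_i}}\|^2+\|\mu_{\gamma_{x_m,x_j}}\|^2-2\langle \mu_{\gamma_{x_m,x_i}},\mu_{\gamma_{x_m,x_j}}\rangle_{W^\ast}.
\end{equation*}
Along the shortest path, the steps $\Delta_{i_k i_{k+1}}$ are all parent--child pairs. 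For such a pair $(p,c)$ the same additivity gives $\Delta_{pc}=\|\mu_{\gamma_{x_p,x_c}}\|^2$. The claim therefore reduces to two facts. First, the self-norm of a path is asymptotically additive over its edges. Second, the cross term between the two branching segments is negligible compared to the self-norms.

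For the first fact I compute the self-norm of a single smooth regular segment $\gamma$ of length $L$. In arc-length coordinates, $\|x(s)-x(s')\|\approx|s-s'|$ and $\|\vec t(s)-\vec t(s')\|=O(|s-s'|)$ near the diagonal. Laplace's method then gives
\begin{equation*}
\|\mu_\gamma\|^2=\iint e^{-\|x-y\|^2/\sigma_x^2}e^{-\|\vec t(x)-\vec t(y)\|^2/\sigma_t^2}\,ds\,ds'\sim c\,\sigma_x L,
\end{equation*}
with $c$ a constant. When $\sigma_t\asymp\sigma_x$, the angular factor changes $c$ by at most a bounded amount, and the curvature correction is of lower order uniformly, since the curve is compact and smooth. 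Assumption~\ref{A1} ensures there are no far-from-diagonal near-returns within one path, so only the diagonal contributes.

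For a path formed by consecutive edges $e_1,\dots,e_q$ (smooth through nodes, by assumption), expand $\|\sum_k\mu_{e_k}\|^2$. The diagonal terms give $\sim c\sigma_x\sum_k\ell(e_k)$. The cross terms between adjacent edges live only in an $O(\sigma_x)$-neighbourhood of the shared node and are $O(\sigma_x^2)$. Cross terms between non-adjacent edges are exponentially small, by Assumption~\ref{A1} and the embedding separation. Hence $\|\mu_{\gamma_{x_m,x_i}}\|^2\sim\sum_k\Delta_{i_ki_{k+1}}$ over the $m\to i$ half of the path, and likewise for the $m\to j$ half. At leading order everything is proportional to arc length, so the equivalence reads $\Delta_{ij}\sim c\sigma_x(\ell(\gamma_{x_m,x_i})+\ell(\gamma_{x_m,x_j}))$.

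The main obstacle is showing that the cross term between the two branches leaving $x_m$ is $o(\sigma_x)$, given that both branches start at the same point. Away from a neighbourhood $U_m$, the branches are separated by Assumption~\ref{A1}/embedding injectivity, so the contribution there is exponentially small. Inside $U_m$, parametrize by arc lengths $s,s'$ from $x_m$. Both branches share the tangent $\vec t(x_m)$ at their origin, so the spatial factor alone only gives an $O(\sigma_x^2)$ bound, which is already $o(\sigma_x)$ and would suffice. The issue is that the tangents might nearly coincide, so the two branches could look like a single curve at scale $\sigma_x$. To rule this out, Assumption~\ref{A2} and Assumption~\ref{A3} give
\begin{equation*}
\|\vec t(x)-\vec t(y)\|\ge \max(s,s')^a\ge \left(\tfrac{s+s'}{2}\right)^a.
\end{equation*}
I would then bound the angular factor by $e^{-((s+s')/2)^{2a}/\sigma_t^2}$ and integrate over $s,s'\ge0$. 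This gives $O(\sigma_t^{2/a})$, which is $o(\sigma_x)$ whenever $a<2$ and $\sigma_t\asymp\sigma_x$. This is precisely where the restriction $0<a<2$ and the coupling $\sigma_x\asymp\sigma_t$ enter. Combining the three estimates, $\Delta_{ij}$ and $\sum_k\Delta_{i_ki_{k+1}}$ are both $c\sigma_x(\ell_1+\ell_2)(1+o(1))$, so their ratio tends to $1$.
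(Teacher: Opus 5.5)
Your route is the paper's. Additivity reduces $\Delta_{ij}$ to the two branches below the most recent common ancestor $m$, and the diagonal part of the edgewise expansion is exactly $\sum_k\Delta_{i_ki_{k+1}}$. Cross terms of consecutive edges are negligible (Lemma~\ref{lemma:1}), and the cross term of the two edges leaving $x_m$ is negligible by Assumptions~\ref{A2}--\ref{A3} with $a<2$ and $\sigma_t\le k_2\sigma_x$ (Lemma~\ref{lemma:2}). Finally, $\sigma_t\ge k_1\sigma_x$ keeps each $\|\mu_e\|^2_{W^\ast}$ of order $\sigma_x$. Your estimates are more direct than the paper's $\varepsilon=\sigma_x^p$ splitting: you get an $O(\sigma_x^2)$ bound from $\|x-y\|\gtrsim s+s'$ across a node of one smooth path, and an explicit $O(\sigma_t^{2/a})$ integral at the branch point. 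You also dispose of non-adjacent edge pairs, which the paper's sketch leaves implicit.

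Two of your statements are wrong, though neither is load-bearing.

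\emph{The self-norm asymptotics.} The claim $\|\mu_\gamma\|^2_{W^\ast}\sim c\,\sigma_x L$ with a fixed $c$ fails when $\sigma_t\asymp\sigma_x$. At the diagonal scale $|s-s'|\sim\sigma_x$, the angular exponent $\kappa(s)^2(s-s')^2/\sigma_t^2$ is of order one. The leading term is therefore $\sqrt{\pi}\,\sigma_x\int_\gamma(1+\kappa^2\sigma_x^2/\sigma_t^2)^{-1/2}\,d\ell$, which depends on the curvature $\kappa$. Divided by $\sigma_x$, it need not converge at all, since $\sigma_t/\sigma_x$ may oscillate in $[k_1,k_2]$. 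Your closing claim that both sides equal $c\,\sigma_x(\ell_1+\ell_2)(1+o(1))$ is therefore false. Conclude instead from the exact diagonal identity together with the lower bound $\|\mu_e\|^2_{W^\ast}\ge c_e\sigma_x$; that lower bound is all the paper proves and all you need.

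\emph{The spatial factor at the branch point.} The sentence saying that a shared tangent at $x_m$ lets the spatial factor alone give $O(\sigma_x^2)$ is backwards. That bound needs $\|x-y\|\gtrsim s+s'$, which fails for siblings with a common tangent: there $\|x(s)-y(s')\|\approx|s-s'|$ near $s=s'$. It is your angular estimate that produces a rate there.

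For the bare equivalence, even the angular estimate is dispensable. Let $e,e'$ be the two edges leaving $x_m$, and drop the angular factor. Since each edge is a compact embedded regular arc, $\int_{e'}e^{-\|x-y\|^2/\sigma_x^2}\,d\ell(y)\le C\sigma_x e^{-\delta(x)^2/(2\sigma_x^2)}$, where $\delta(x)=\mathrm{dist}(x,e')>0$ for $x\in e$, $x\neq x_m$. Dominated convergence over $x\in e$ then makes the branch cross term $o(\sigma_x)$. So Assumptions~\ref{A2}--\ref{A3} buy the rate $O(\sigma_x^{2/a})$ rather than the limit itself.
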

    \noindent Note that this last condition is essentially technical and allows us to control the convergence of $\sigma_x$ and $\sigma_t$ - acting at two different scales - simultaneously. The result is a direct consequence of the two following lemmas.
\begin{lemma}\label{lemma:1}
    Let $i, j$ and $k$ be three adjacent nodes of $T$ such that $i < j < k$. Assume that Assumption \ref{A1} holds. Then we have
\begin{equation}
    \label{eq:L1}
    \langle \mu_{\gamma_{x_i, x_j}}, \mu_{\gamma_{x_j, x_k}}\rangle_{W^\ast} \underset{\substack{\vspace{.05cm} \\ \sigma_x, \sigma_t \to 0 \vspace{.1cm} \\ \sigma_x \asymp \sigma_t}}{=} o\left(\|\mu_{\gamma_{x_i, x_j}}\|_{W^\ast}^2 + \|\mu_{\gamma_{x_j, x_k}}\|_{W^\ast}^2\right).
\end{equation}
\end{lemma}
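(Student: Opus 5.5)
Write $\gamma_1=\gamma_{x_i,x_j}$ and $\gamma_2=\gamma_{x_j,x_k}$. The plan is to compare the asymptotics of the cross term with those of the norms as $\sigma_x,\sigma_t\to 0$.

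\textbf{Step 1: the norms.} Start from the double-integral formula for $\langle\mu_{\gamma_1},\mu_{\gamma_2}\rangle_{W^\ast}$ and parametrize both curves by arc length. For a single smooth regular curve, nearby points have nearly equal tangents, so $k_{\sigma_t}\to 1$ on the region where $k_{\sigma_x}$ is non-negligible. This holds because $\sigma_x\asymp\sigma_t$: over a length of order $\sigma_x$ the tangent changes by $O(\sigma_x)=O(\sigma_t)$, so $k_{\sigma_t}$ stays bounded below by a positive constant. Hence
\[
\|\mu_{\gamma_1}\|^2_{W^\ast}\ \ge\ c\iint e^{-(s-s')^2/\sigma_x^2}\,ds\,ds'\ \gtrsim\ \sigma_x\,\ell(\gamma_1),
\]
and the same holds for $\gamma_2$. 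So it suffices to show that the cross term is $o(\sigma_x)$.

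\textbf{Step 2: reduce the cross term to a neighborhood of $x_j$.} Let $s$ be the arc length on $\gamma_1$ measured backward from $x_j$, and $u$ the arc length on $\gamma_2$ measured forward from $x_j$, so that $\ell(\gamma_{x,y})=s+u$. Bound $k_{\sigma_t}\le 1$ and look only at the spatial kernel. By Assumption~\ref{A1}, $\|x(s)-y(u)\|$ is a nondecreasing function of $s+u$. Write it as $f(s+u)$.
\begin{itemize}
\item Because the path from the root through $x_j$ to $x_k$ is a smooth regular curve, $f(\ell)\sim\ell$ as $\ell\to 0$.
\item Because the embedding preserves topology, the two edges meet only at $x_j$, so $f(\ell)>0$ for $\ell>0$.
\item Monotonicity of $f$ then gives $f(\ell)\ge c\min(\ell,\ell_0)$ for some $\ell_0>0$.
\end{itemize}
Consequently
\[
\langle\mu_{\gamma_1},\mu_{\gamma_2}\rangle\ \le\ \iint_{s,u\ge 0} e^{-c^2\min(s+u,\ell_0)^2/\sigma_x^2}\,ds\,du.
\]
The part with $s+u\ge\ell_0$ is exponentially small. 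After the substitution $s=\sigma_x a$, $u=\sigma_x b$, the part with $s+u<\ell_0$ is $O(\sigma_x^2)$. Therefore the cross term is $O(\sigma_x^2)=o(\sigma_x)$, and with Step 1 the claim follows.

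\textbf{Main obstacle.} The hardest step is the lower bound on the norms, which needs $k_{\sigma_t}$ to stay bounded below on the diagonal band. This is exactly where $\sigma_x\asymp\sigma_t$ and the regularity (bounded curvature) of the curves are used. The remaining steps are a Laplace-type estimate, and only the two-dimensional corner at $x_j$ contributes, which is why the cross term gains an extra power of $\sigma_x$. If the lower bound of $f$ near $0$ turns out to be delicate, the first thing I would try is Taylor-expanding the concatenated smooth path at $x_j$. This gives $\|x(s)-y(u)\|=(s+u)(1+o(1))$ directly.
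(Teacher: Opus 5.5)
Your proof is correct, and its overall structure matches the paper's. In both, the norms are bounded below by a multiple of $\sigma_x$ using the Lipschitz tangent map and $\sigma_t\ge k_1\sigma_x$. The cross term is bounded by discarding the angular kernel and combining Assumption~\ref{A1} with the local equivalence $\|x-y\|\sim\ell(\gamma_{x,y})$ at $x_j$.

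The two proofs differ in how the cross term is estimated:
\begin{itemize}
\item The paper cuts both edges at arc length $\varepsilon$ from $x_j$ and splits the inner product into four blocks. It bounds each block by its area times the kernel value at its innermost corner, which is where monotonicity enters. It then tunes $\varepsilon=\sigma_x^p$ with $\tfrac12<p<1$. This gives a cross term of order $\sigma_x^{2p}$ and a ratio of order $\sigma_x^{2p-1}$.
\item You observe that Assumption~\ref{A1} makes $\|x(s)-y(u)\|$ a nondecreasing function of $s+u$ alone. You upgrade the local equivalence to the global bound $f(\ell)\ge c\min(\ell,\ell_0)$ and integrate the Gaussian exactly. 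This gives a cross term $O(\sigma_x^2)$ and the sharper ratio $O(\sigma_x)$, with no auxiliary scale to tune.
\end{itemize}
Your computation also transfers directly to Lemma~\ref{lemma:2}. There the tangent-angle function is bounded below by $\ell^a$, the cross term becomes $O(\sigma_t^{2/a})$, and the role of the constraint $a<2$ becomes transparent.

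One small repair is needed in Step~1. The displayed bound $\|\mu_{\gamma_1}\|^2_{W^\ast}\ge c\iint e^{-(s-s')^2/\sigma_x^2}\,ds\,ds'$ is not literally true, because $k_{\sigma_t}$ is bounded below by a constant only on the band $|s-s'|\lesssim\sigma_x$. You can either restrict the integral to that band, as the paper does with $\rho=\sigma_x$, or bound the product of kernels below by $e^{-(1+K^2/k_1^2)(s-s')^2/\sigma_x^2}$. Either way, the conclusion $\|\mu_{\gamma_1}\|^2_{W^\ast}\gtrsim\sigma_x\,\ell(\gamma_1)$ follows unchanged.
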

\begin{proof}
    See Appendix~\ref{app:proofs}.
\end{proof}
\begin{lemma}\label{lemma:2}
    Let $i, j$ and $k$ be three adjacent nodes of $T$ such that $i < j$ and $i < k$. Assume that Assumptions \ref{A2} and \ref{A3} hold. Then we have
\begin{equation}
    \label{L2}
    \langle \mu_{\gamma_{x_i, x_j}}, \mu_{\gamma_{x_i, x_k}}\rangle_{W^\ast} \underset{\substack{\vspace{.05cm} \\ \sigma_x, \sigma_t \to 0 \vspace{.1cm} \\ \sigma_x \asymp \sigma_t}}{=} o\big(\|\mu_{\gamma_{x_i, x_j}}\|_{W^\ast}^2 + \|\mu_{\gamma_{x_i, x_k}}\|_{W^\ast}^2\big).
\end{equation}
\end{lemma}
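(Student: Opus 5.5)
We estimate the cross term and the two squared norms separately as $\sigma_x,\sigma_t\to0$ with $\sigma_x\asymp\sigma_t$, and show that the cross term is of strictly smaller order than either norm. Parametrize both edges by arc length from the common node $x_i$: $\gamma_{x_i,x_j}$ by $s\in[0,L_1]$ and $\gamma_{x_i,x_k}$ by $u\in[0,L_2]$, both starting at $x_i$. This is consistent with the orientation from the root.

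\textbf{Step 1: lower bound on the norms.} For a single smooth regular curve $\gamma$ of length $L$,
\[
\|\mu_\gamma\|^2_{W^\ast}=\iint e^{-\|\gamma(s)-\gamma(s')\|^2/\sigma_x^2}e^{-\|\vec t(s)-\vec t(s')\|^2/\sigma_t^2}\,ds\,ds'.
\]
Smoothness gives $\|\gamma(s)-\gamma(s')\|\le|s-s'|$ and $\|\vec t(s)-\vec t(s')\|\le C|s-s'|$. Using $\sigma_t\ge k_1\sigma_x$, the integrand is therefore at least $e^{-c|s-s'|^2/\sigma_x^2}$, so $\|\mu_\gamma\|^2\ge c'L\sigma_x$. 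Hence the right-hand side of Lemma~\ref{lemma:2} is $\gtrsim\sigma_x$.

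\textbf{Step 2: upper bound on the cross term.} Write
\[
I=\int_0^{L_1}\!\!\int_0^{L_2}e^{-\|x(s)-y(u)\|^2/\sigma_x^2}e^{-\|\vec t(x(s))-\vec t(y(u))\|^2/\sigma_t^2}\,ds\,du.
\]
Fix a small $\eta>0$ and split the domain into $D_\eta=\{s+u\le\eta\}$ and its complement.

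On the complement, Assumption~\ref{A2} says $\|\vec t(x)-\vec t(y)\|$ is nondecreasing in $s+u$. We need this quantity to be bounded below by some $\delta(\eta)>0$ there. This holds because the two branches leave $x_i$ with a nonzero angular separation: by Assumption~\ref{A3} the separation is strictly positive for small $s+u$, and monotonicity propagates it. The angular factor is then at most $e^{-\delta^2/\sigma_t^2}$, so this piece is $O(L_1L_2e^{-\delta^2/\sigma_t^2})=o(\sigma_x)$.

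On $D_\eta$, take $\eta$ small enough that $D_\eta$ lies in $U_i$. Bound the spatial factor crudely by $1$, or better use it to restrict to $|s-u|\lesssim\sigma_x$. For the tangents, use the triangle inequality with $\vec t(x_i)$, together with Assumption~\ref{A3} for the two branches. For the lower bound on $\|\vec t(x)-\vec t(y)\|$, I would first try the following: the two branches leave $\vec t(x_i)$ in different directions, so combined with the monotonicity of Assumption~\ref{A2},
\[
\|\vec t(x)-\vec t(y)\|\gtrsim\max(s,u)^a\ge\bigl((s+u)/2\bigr)^a.
\]
Then
\[
I_{D_\eta}\le\iint_{D_\eta}e^{-c(s+u)^{2a}/\sigma_t^2}\,ds\,du\lesssim\int_0^\eta r\,e^{-cr^{2a}/\sigma_t^2}\,dr\asymp\sigma_t^{2/a}.
\]
Since $a<2$, we have $2/a>1$, so $\sigma_t^{2/a}=o(\sigma_t)=o(\sigma_x)$ by $\sigma_x\asymp\sigma_t$. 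This is exactly where the condition $a<2$ and the comparability of scales enter.

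\textbf{Step 3: conclusion.} Combining the two steps, $I=o(\sigma_x)=o\bigl(\|\mu_{\gamma_{x_i,x_j}}\|^2+\|\mu_{\gamma_{x_i,x_k}}\|^2\bigr)$, which is the claim of Lemma~\ref{lemma:2}.

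\textbf{Main obstacle.} The hardest step is the lower bound on $\|\vec t(x)-\vec t(y)\|$ near $x_i$. As stated, Assumption~\ref{A3} only controls each branch's deviation from $\vec t(x_i)$, not the deviation between the branches, since two branches could deviate from $\vec t(x_i)$ in the same way. I would resolve this by invoking Assumption~\ref{A2} at the base point: its monotonicity, together with the fact that the two edges are distinct smooth curves meeting only at $x_i$, forces a genuine separation between the branches. If that fails, I would interpret Assumption~\ref{A3} as giving the pairwise bound directly.
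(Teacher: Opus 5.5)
Your proof is correct. The obstacle you flag is settled in one line by the move you suggest: apply Assumption~\ref{A2} to the pair $(x_i,y(s+u))$, taking $x_i$ as the arc-length-$0$ point of $\gamma_{x_i,x_j}$. Its arc-length sum equals that of $(x(s),y(u))$, so $\|\vec{t}(x(s))-\vec{t}(y(u))\|\ge\|\vec{t}(x_i)-\vec{t}(y(s+u))\|\ge(s+u)^a$ by Assumption~\ref{A3}, as long as $y(s+u)\in U_i$. This gives your near-region bound (with $(s+u)^a$, slightly better than $\max(s,u)^a$) and the explicit far-region constant $\delta(\eta)=\eta^a$. Neither the argument about distinct curves meeting only at $x_i$ nor a reinterpretation of Assumption~\ref{A3} is needed. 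One wording correction: the branches do not leave $x_i$ at a nonzero angle. Since root-to-node paths are smooth, for $i$ not the root both edges are tangent to the incoming edge at $x_i$. Only the separation for $s+u>0$ is positive, which is why Assumption~\ref{A3} is needed at all.

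The paper uses the same base-point comparison; its exponents involve $\|\vec{t}(x_i)-\vec{t}(x^R_{i,\varepsilon})\|$. Where you differ is in how the cross term is organized.
\begin{itemize}
\item The paper cuts at the $\sigma$-dependent scale $\varepsilon=\sigma_x^p$ with $\tfrac12<p<\tfrac1a$. It bounds the near--near block crudely by its area $\varepsilon^2=o(\sigma_x)$, and the other three blocks by multiples of $e^{-\varepsilon^{2a}/\sigma_t^2}$, which is superpolynomially small because $ap<1$.
\item You cut at a fixed $\eta$, so the far part is exponentially small. You then integrate the pointwise angular bound over the near triangle to get $O(\sigma_t^{2/a})$.
\end{itemize}
The hypothesis $a<2$ enters the same way in both: a nonempty $p$-interval versus $2/a>1$. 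The lower bound $\gtrsim\sigma_x$ on the norms is identical. Your version gives an explicit rate for the cross term and needs no choice of $p$; the paper's needs no integration.
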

\begin{proof}
    See Appendix~\ref{app:proofs}.
\end{proof}

Let us sketch out the proof of Proposition \ref{prop:shortest_path_distance}. First, we write the varifold distance between the paths (in $\bR^n$) from the root to $i$ and $j$ as the distance between the two paths from their common ancestor, which is one of the nodes visited by the shortest path in $T$. Then we write each of the two paths as the union of the edges composing them and decompose their squared distance accordingly. For all scalar products thus generated, we apply Lemma 1, unless one of the nodes considered is the common ancestor and a branching point, in which case we apply Lemma 2. 
\medbreak
As a consequence of Proposition \ref{prop:shortest_path_distance}, we also have that $\Delta$ asymptotically satisfies the triangular inequality
\begin{equation}
\Delta_{ik} \underset{\substack{\vspace{.05cm} \\ \sigma_x \to 0 \vspace{.05cm} \\ \sigma_t \to 0}}{\leq} \Delta_{ij} + \Delta_{jk} + o(\sup_{e \in E}\Delta(e))
\end{equation}
and the four-points condition for a distance matrix to be a shortest-path distance on a tree~\cite{pereira_note_1969}
\begin{equation}
\Delta_{ik} + \Delta_{jl} \leq \max(\Delta_{ij} + \Delta_{kl}, \Delta_{jk} +  \Delta_{li}) + o(\sup_{e \in E}\Delta_e).
\end{equation}

\section{Experiments}\label{sec:experiments}

We now evaluate our method on both simulated and real data sets sampling differentiation trajectories. For the simulated datasets, we compare the trajectory inferred by our method to that inferred by VeTra \cite{weng_vetra_2021} and CellPath \cite{zhang_cellpath_2021}, using the ground-truth trajectory that can be derived directly from the simulation model.

\subsection{Simulated comparative study}
We first evaluate our method on several synthetic datasets simulated via the \texttt{dyngen} reference library~\cite{robrecht_dyngen_2021}. Each dataset in \texttt{dyngen} is modeled on what is called the \textit{backbone} of the dataset -- a coarse-grained trajectory consisting of only the root node, branch nodes, and leaf nodes. We consider three different backbones of increasing complexity: a tree with a single branch node of degree $3$ (bifurcating trajectory), already illustrated earlier in this article, a tree with a single branch node of degree $4$ (trifurcating trajectory), and a tree with two consecutive branch nodes of degree $3$ (double bifurcating trajectory). For these three datasets, we verify that the tree inferred by our method closely matches the corresponding backbone. Furthermore, the dissimilarity matrix calculated prior to tree inference itself provides a very faithful representation of the backbone (Fig.~\ref{fig:simulated}). Additionally, we observe that the trees inferred are fairly robust to the choice of parameters for the dissimilarity matrix (reported in Table~\ref{tab:parameters}).
\begin{figure}[!h]
    \centering
    \begin{tabular}{cccc}
    \rotatebox[origin=c]{90}{\small Step 3} &
    \begin{subfigure}[!h]{.28\linewidth}
        \centering
        \includegraphics[width=\linewidth]{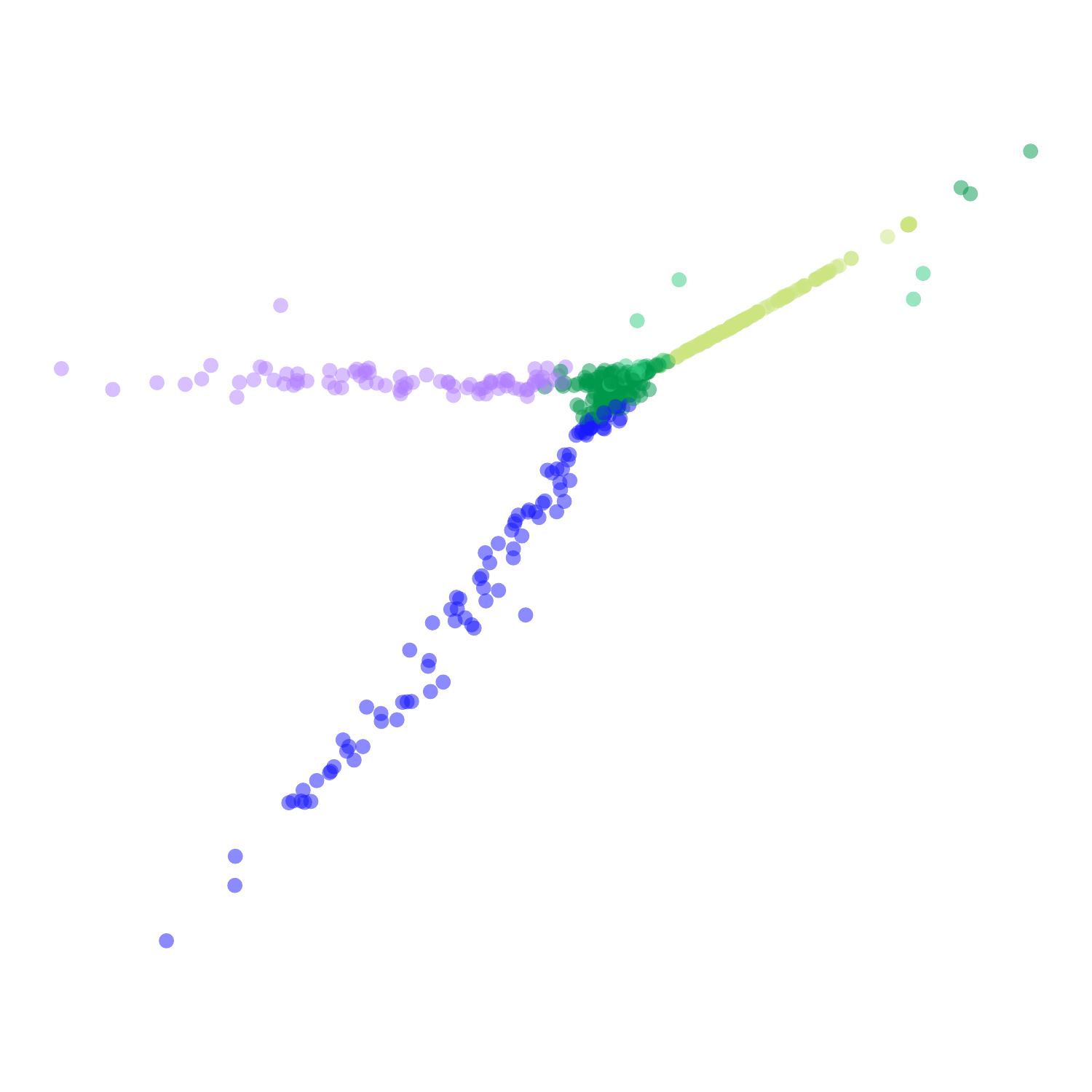}
    \end{subfigure} &
    \begin{subfigure}[!h]{.28\linewidth}
        \centering
        \includegraphics[width=\linewidth]{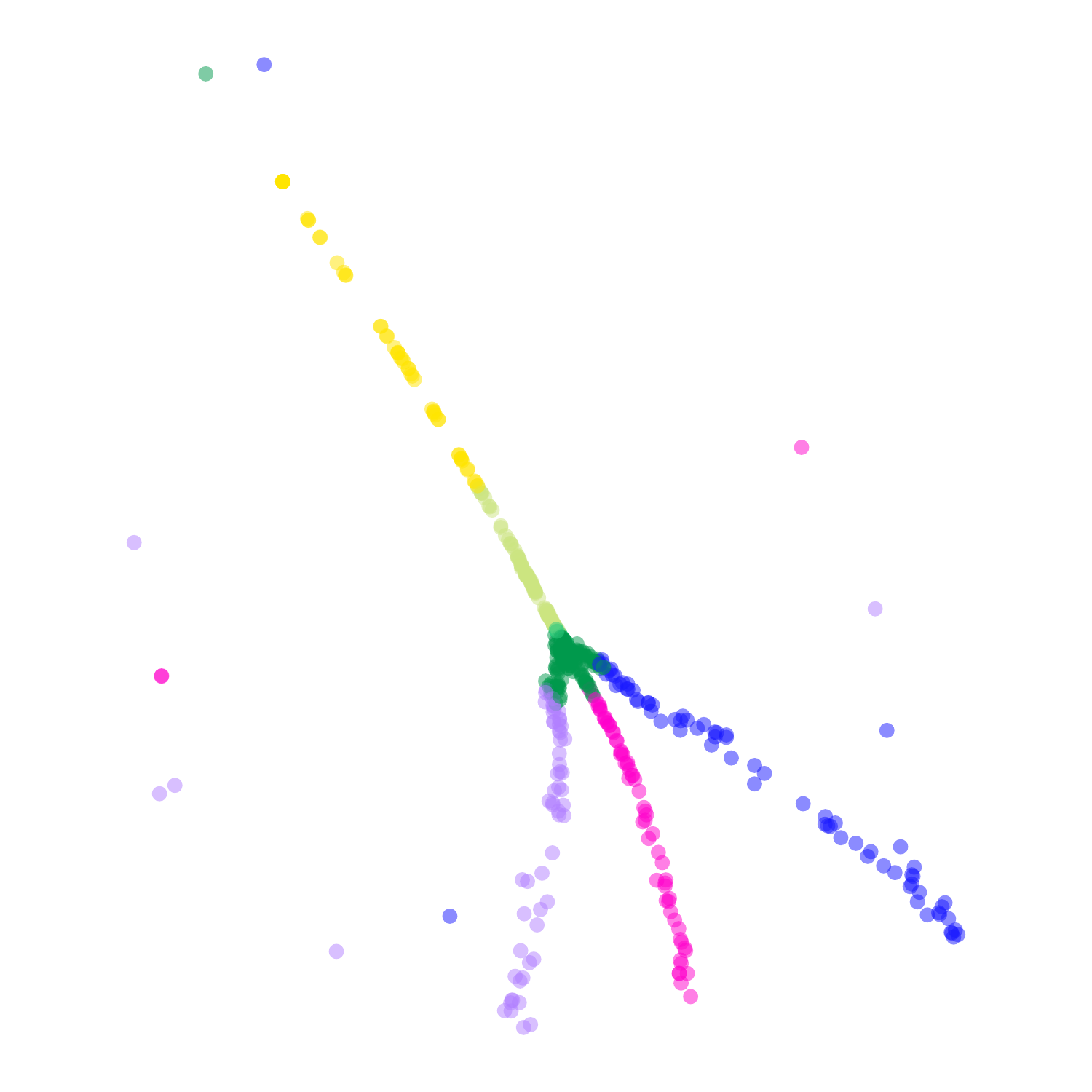}
    \end{subfigure} &
    \begin{subfigure}[!h]{.28\linewidth}
        \centering
        \includegraphics[width=\linewidth]{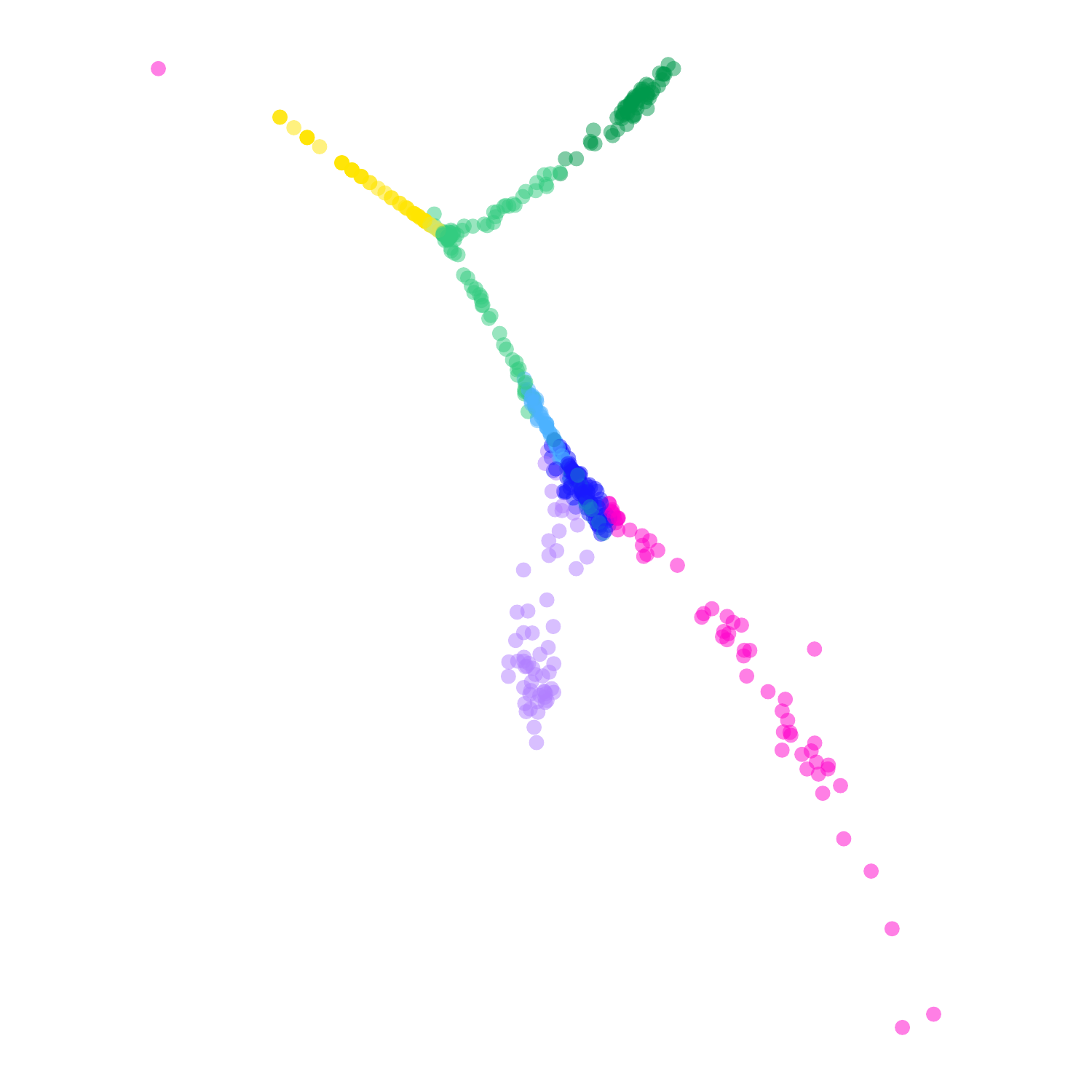}
    \end{subfigure} \\\\
    \rotatebox[origin=c]{90}{\small Step 4} &
    \begin{subfigure}[!h]{.28\linewidth}
        \centering
        \includegraphics[width=\linewidth]{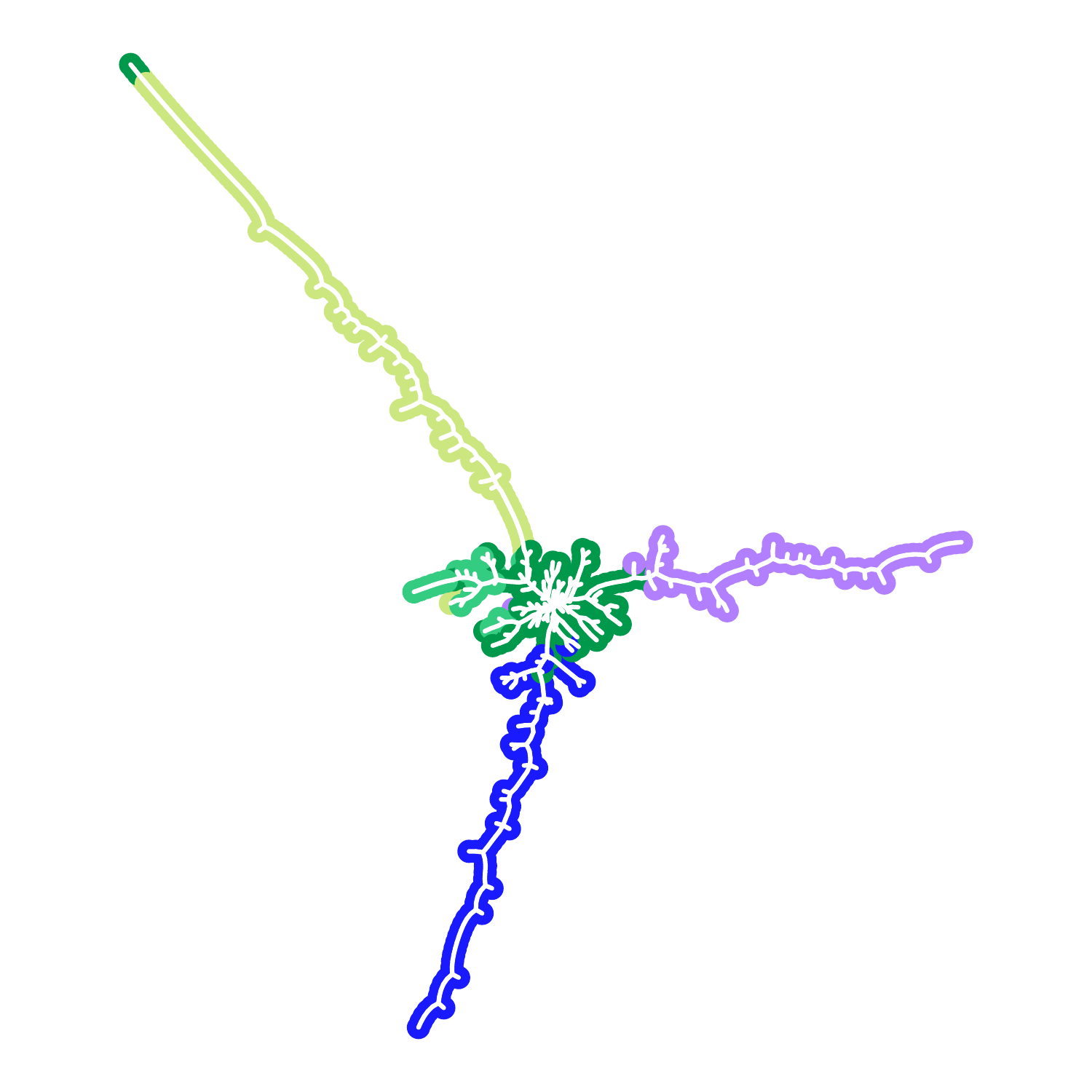}
    \end{subfigure} &
    \begin{subfigure}[!h]{.28\linewidth}
        \centering
        \includegraphics[width=\linewidth]{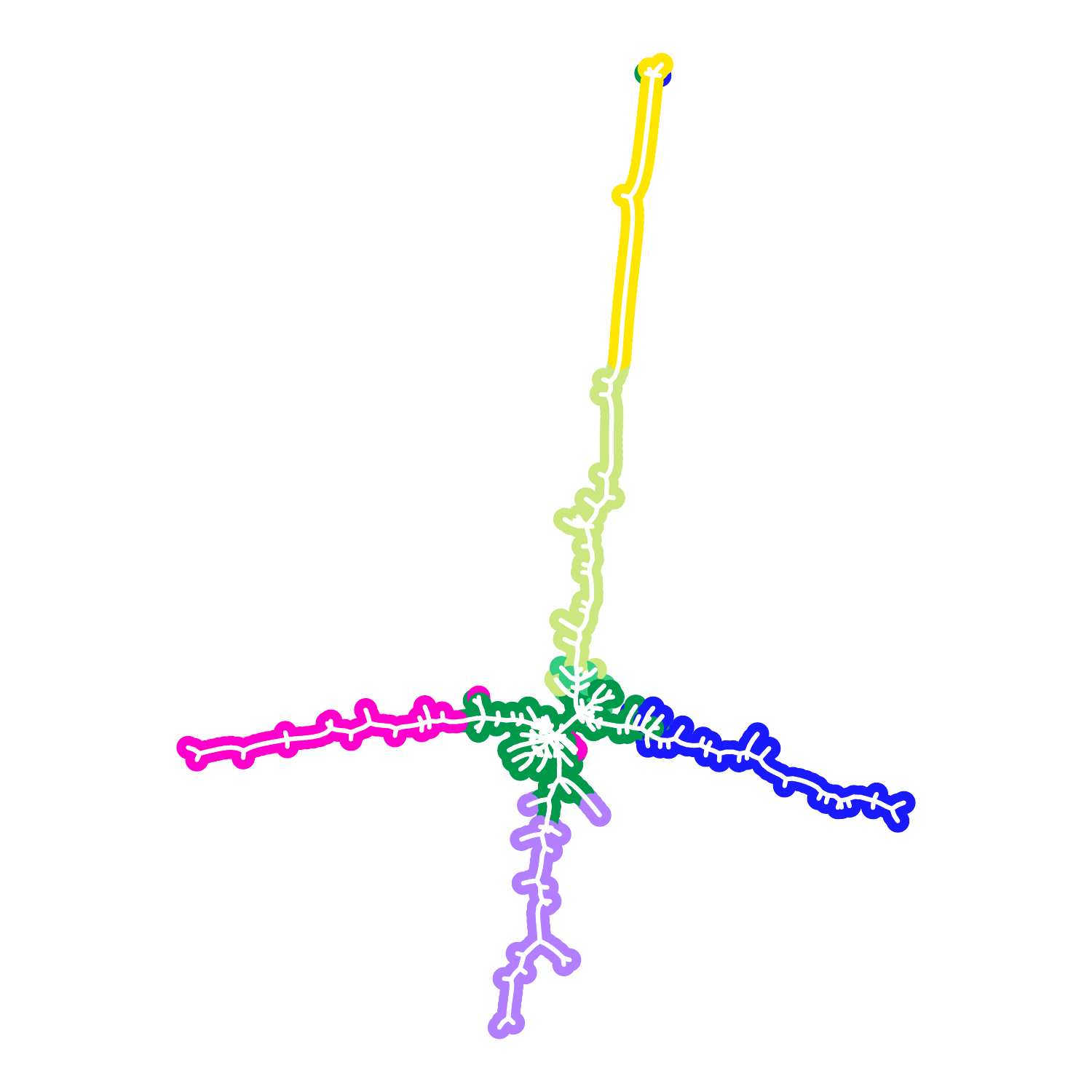}
    \end{subfigure} &
    \begin{subfigure}[!h]{.28\linewidth}
        \centering
        \includegraphics[width=\linewidth]{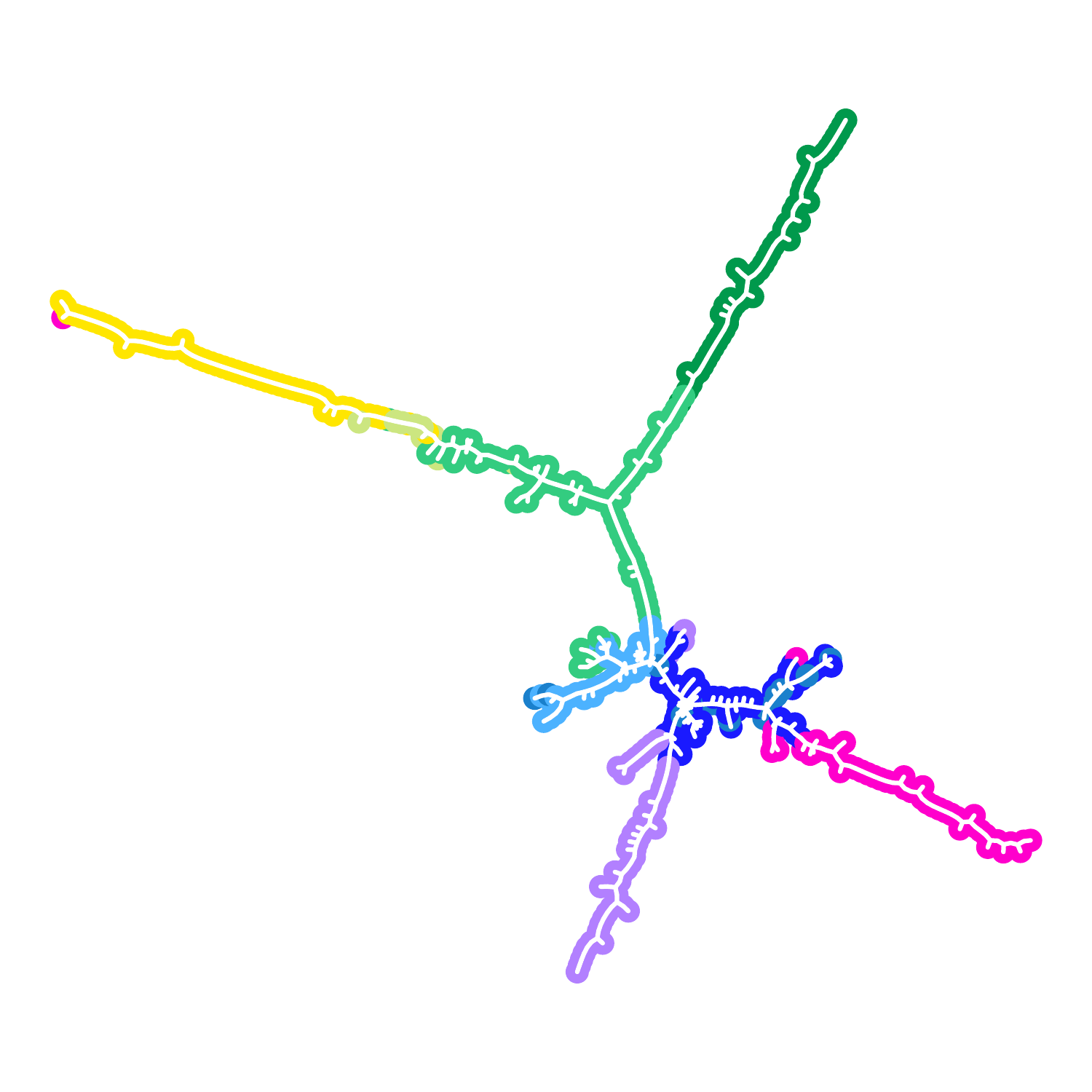} 
    \end{subfigure}
    \end{tabular}
    \vspace{.5cm}
    \caption{Dissimilarity matrix (Step 3) and corresponding tree (Step 4) inferred by our method for a bifurcating trajectory, a trifurcating trajectory, and a double bifurcating trajectory (from left to right) simulated using the \texttt{dyngen} library~\cite{robrecht_dyngen_2021}. Each dataset consists of $N=500$ cells and is originally in dimension $n=100$. We visualize dissimilarity matrices using multidimensional scaling (MDS) as implemented in the library \texttt{scikit-learn}~\cite{scikit-learn_2011}. All tree layouts are generated by the Kamada-Kawai algorithm available in the \texttt{NetworkX} Python library~\cite{networkx_2008}. In all three examples, both the dissimilarity matrix and the inferred tree accurately recover the underlying structure as well as the color-coded cell states. Occasional outliers may appear when integral curves fail to converge, in which case the corresponding nodes are attached to the root of the tree upon tree inference. Minor secondary branches can also emerge near branch nodes, as a result of the local cyclic dynamics inherent in the processes simulated by \texttt{dyngen} -- even in purely differentiating trajectories.}\label{fig:simulated}
    \vspace{-.6cm}
\end{figure}

\medbreak
Given the tree inferred by our method for each dataset, we can assign a pseudotime to any cell in the dataset, defined as the depth of the corresponding node in the tree, scaled to a value between $0$ (root) and $1$ (deepest leaf). We then compare this pseudotime to the pseudotime automatically assigned to the cells in \texttt{dyngen} during simulation, as well as to those produced by VeTra~\cite{weng_vetra_2021} and CellPath~\cite{zhang_cellpath_2021}, two other trajectory inference methods that allow for inferring differentiation trees from RNA velocity fields. Unlike our method, however, VeTra and CellPath do not return a single pseudotime for the entire dataset, but only one for each detected branch of the tree. To compare these two methods with our method, we first aggregate these branch-wise pseudotimes into a single one by simply assigning to each cell the pseudotime(s) assigned to it within a branch (provided that the cell itself has been assigned to at least one branch). We observe that our method is by far the one that yields the most accurate pseudotime (Fig.~\ref{fig:comparative}). 

\medbreak
That said, the pseudotime inferred by our method and the ground-truth pseudotime differ in their parameterization. Indeed, the pseudotime generated by \texttt{dyngen} for a dataset corresponds to the arc length of the simulated trajectory as described by the generated expression profiles in $\bR^d$, whereas the pseudotime assigned by our method to each cell is proportional to the number of preceding cells. Hence, our method does not seek to infer any metric information about the trajectory, but only its topology. To quantitatively assess the performance of our method with respect to the ground-truth, we therefore propose comparing pseudotemporal orderings rather than pseudotimes themselves. To do this, given a trajectory, we compute for each of its branches -- as defined by the backbone -- the percentage of pairs of cells that are correctly ordered by our method, and measure the accuracy of our method as the average percentage over all branches. Similarly, we measure the accuracy of VeTra and CellPath, relying on the previously aggregated pseudotime. To avoid the bias introduced by such aggregation, another way of measuring the accuracy of VeTra (resp. CellPath) would be to first match each ground-truth branch with one of the branches detected by the method, and then compare pseudo-temporal orderings only for the cells that belong to the detected branch. However, this option actually yields poorer performance for both methods. Across the three datasets, our method achieves significantly higher accuracy than VeTra and CellPath (Table~\ref{tab:comparative}). This can be explained by the fact that, in the case of VeTra and CellPath, a significant number of cells are not assigned to any branch (Fig.~\ref{fig:comparative}) and are therefore automatically wrongly ordered relative to all other cells. It should also be noted that CellPath achieves better overall results than VeTra, which is likely due to VeTra relying on a 2D embedding of the dataset (PCA in our implementation). 
\begin{figure}[p]
    \centering
    \begin{tabular}{cccc}
    \rotatebox[origin=c]{90}{\small Ground truth} &
    \begin{subfigure}[!h]{.28\linewidth}
        \centering
        \includegraphics[width=\linewidth]{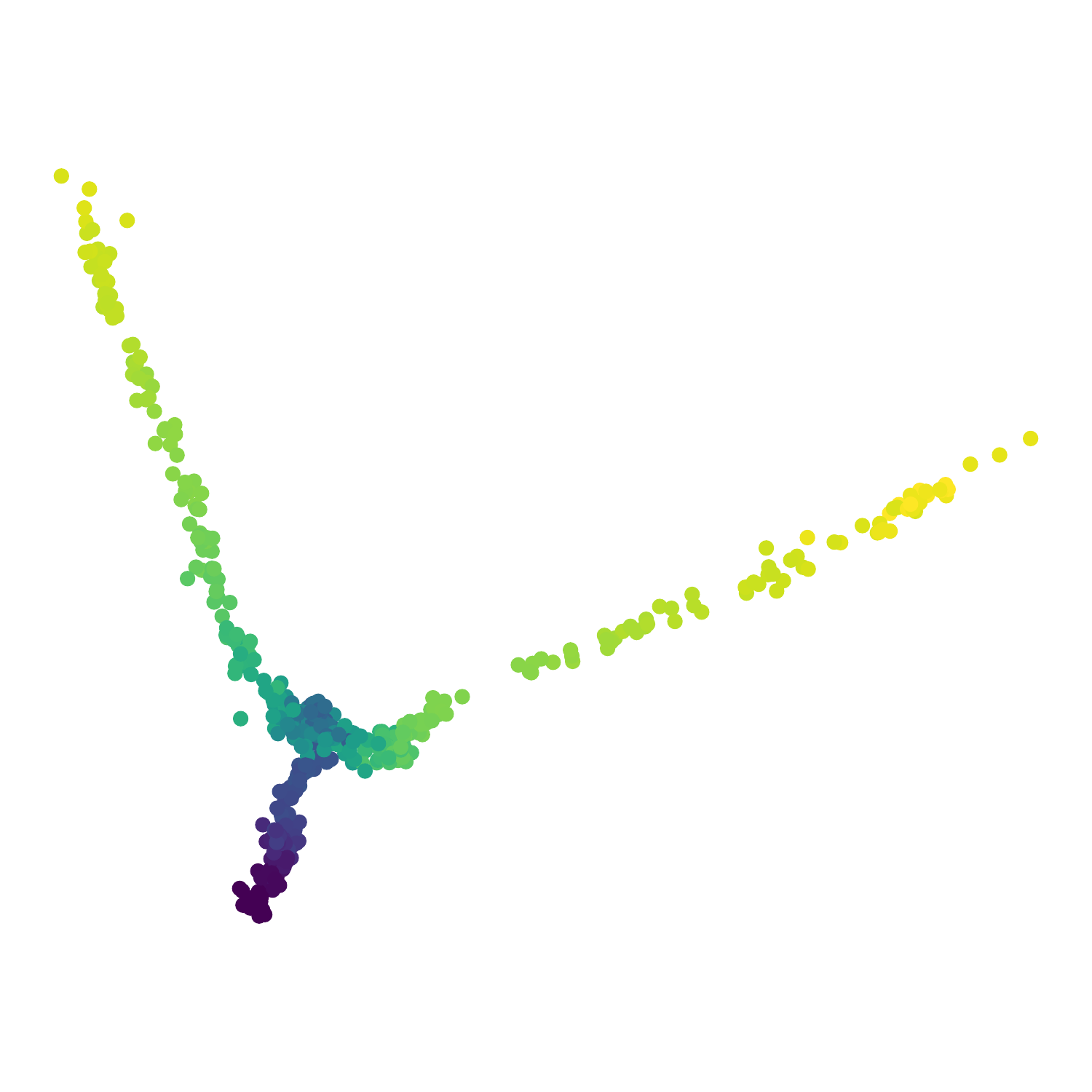}
    \end{subfigure} &
    \begin{subfigure}[!h]{.28\linewidth}
        \centering
        \includegraphics[width=\linewidth]{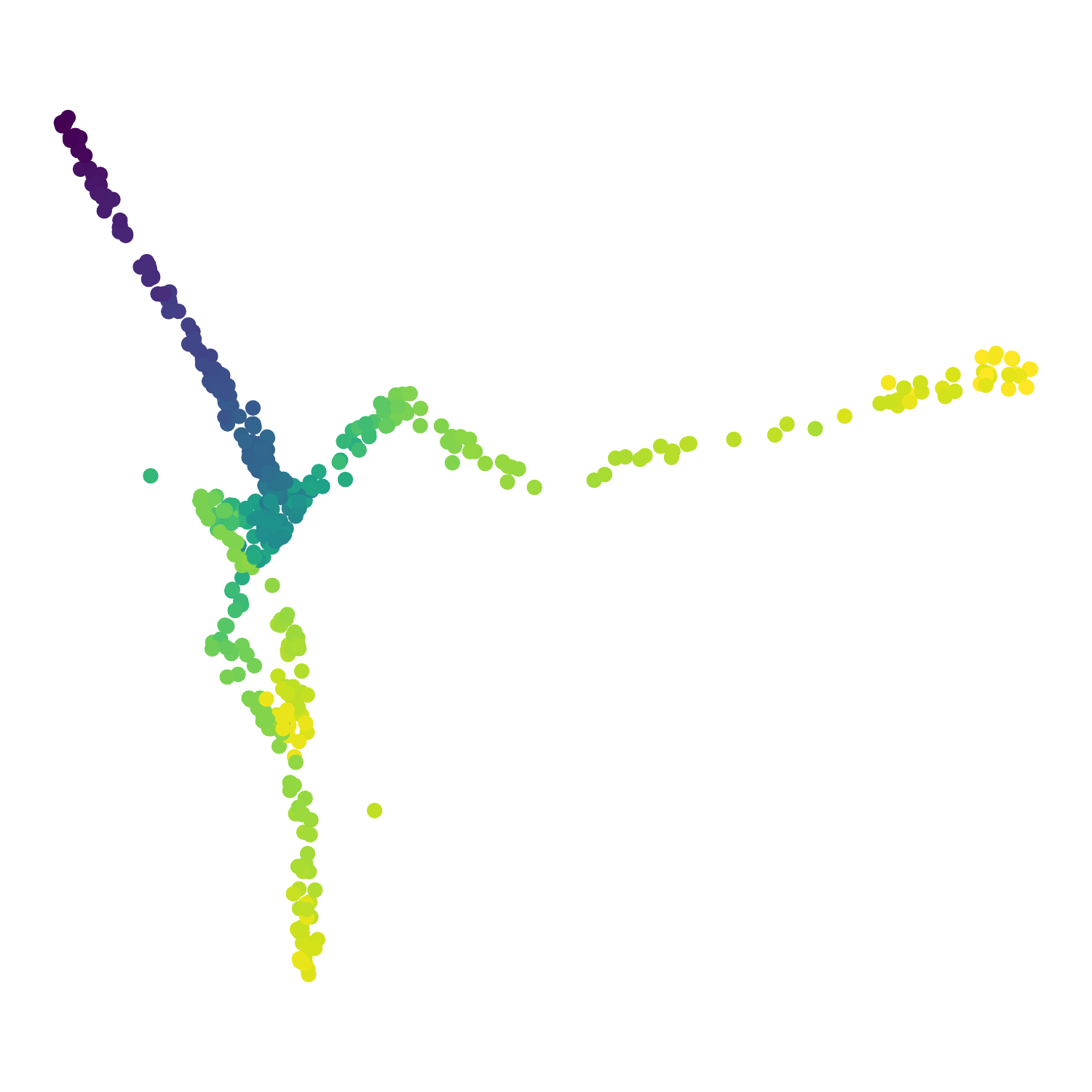}
    \end{subfigure} &
    \begin{subfigure}[!h]{.28\linewidth}
        \centering
        \includegraphics[width=\linewidth]{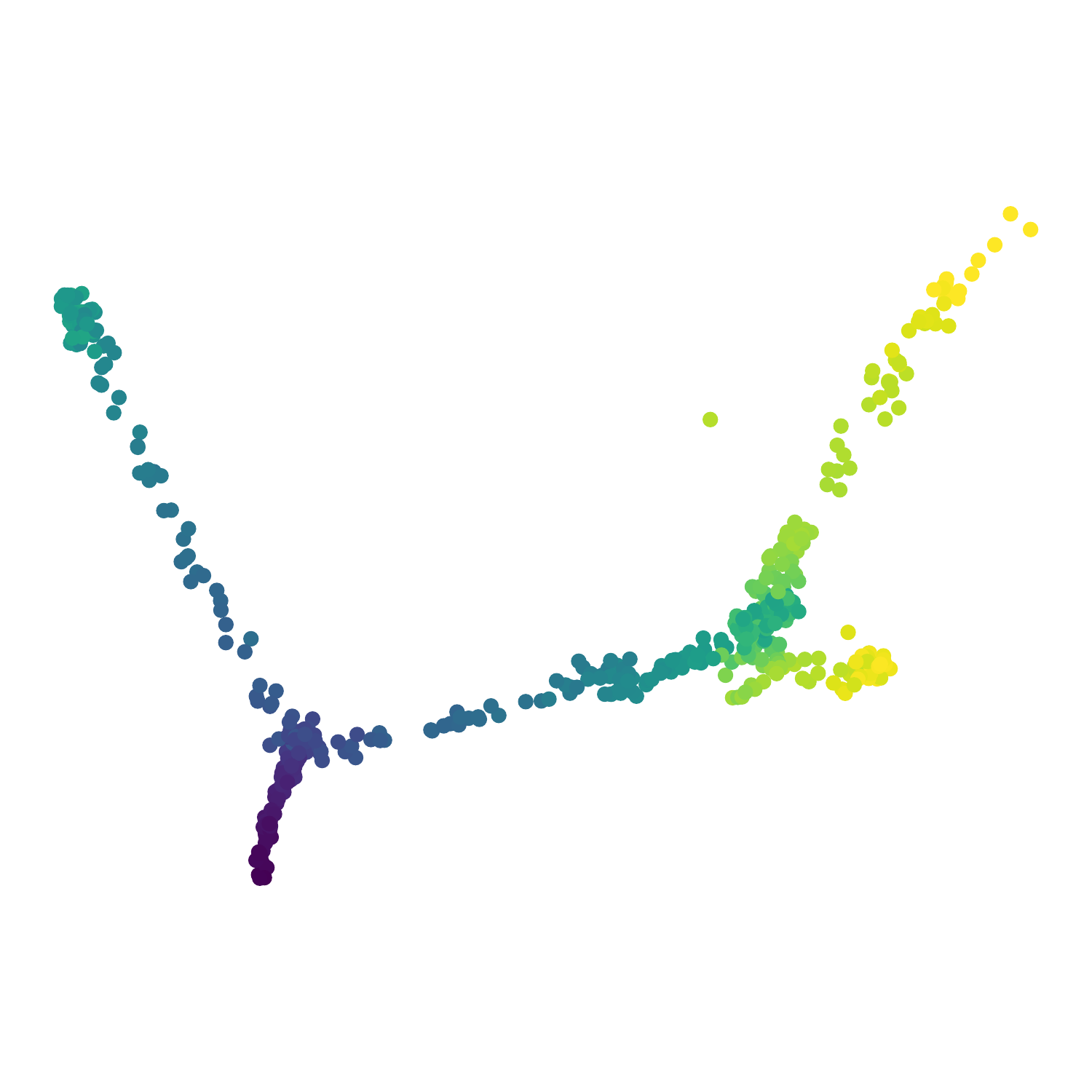}
    \end{subfigure} \\
    \rotatebox[origin=c]{90}{\small Ours} &
    \begin{subfigure}[!h]{.28\linewidth}
        \centering
        \includegraphics[width=\linewidth]{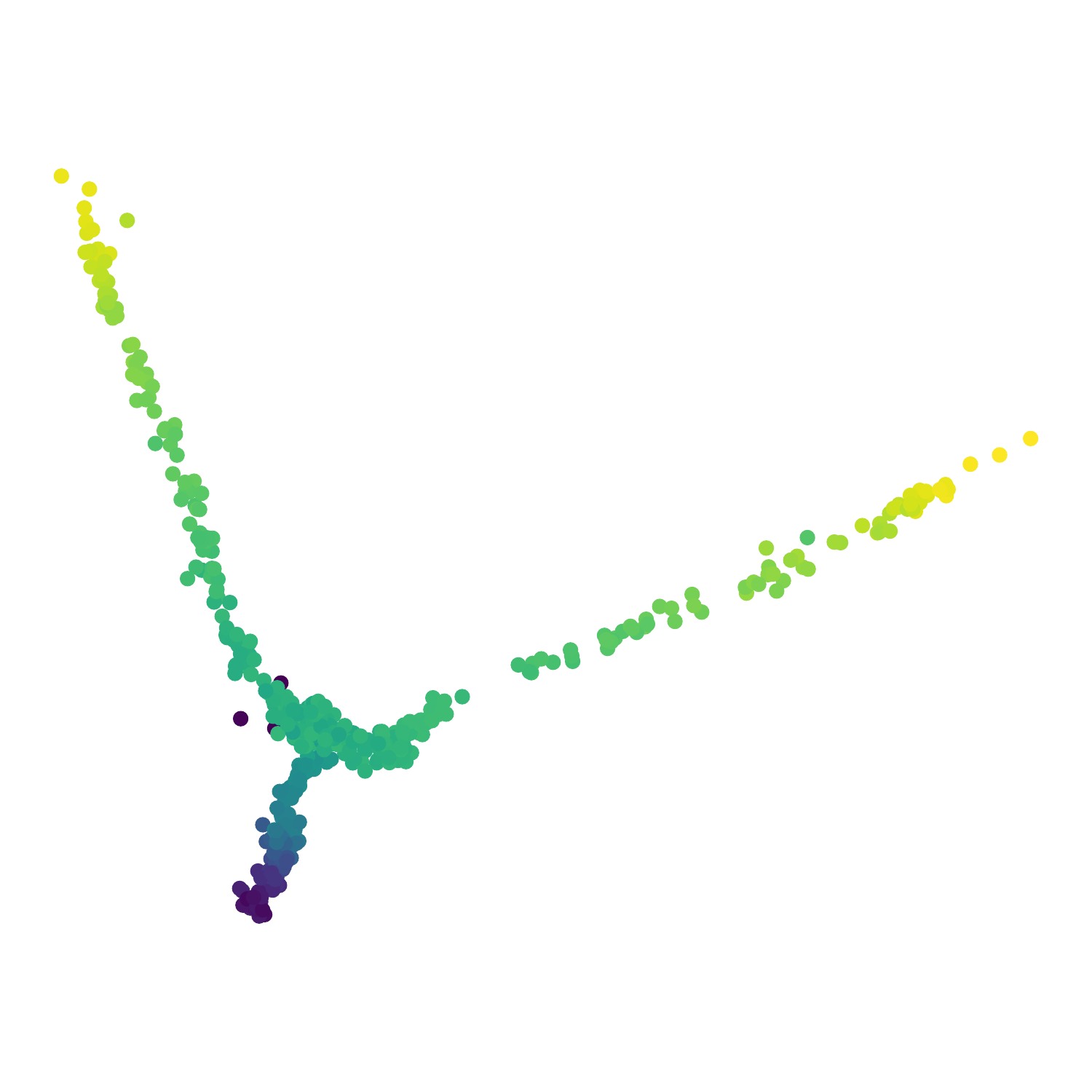}
    \end{subfigure} &
    \begin{subfigure}[!h]{.28\linewidth}
        \centering
        \includegraphics[width=\linewidth]{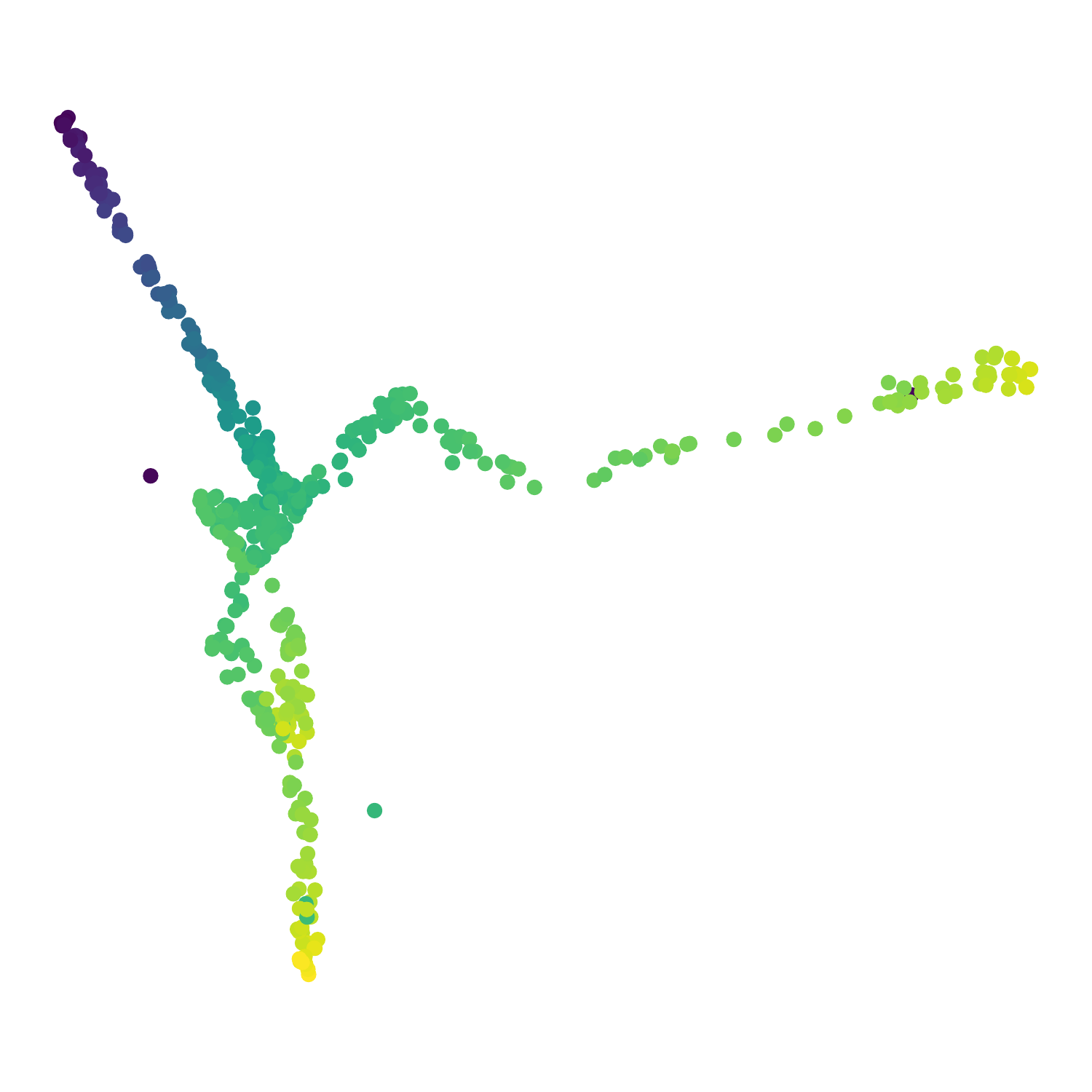}
    \end{subfigure} &
    \begin{subfigure}[!h]{.28\linewidth}
        \centering
        \includegraphics[width=\linewidth]{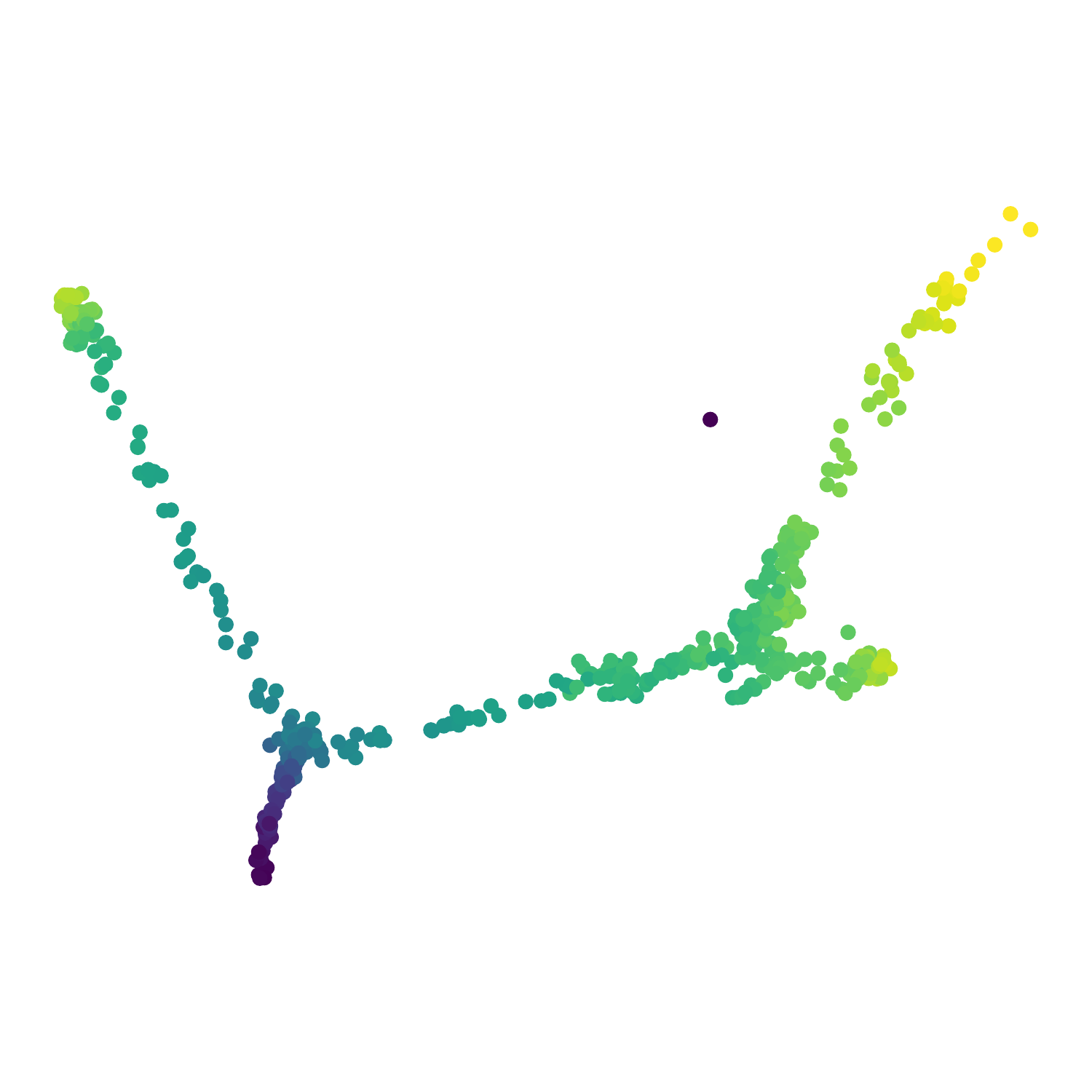} 
    \end{subfigure} \\
    \rotatebox[origin=c]{90}{\small VeTra} &
    \begin{subfigure}[!h]{.28\linewidth}
        \centering
        \includegraphics[width=\linewidth]{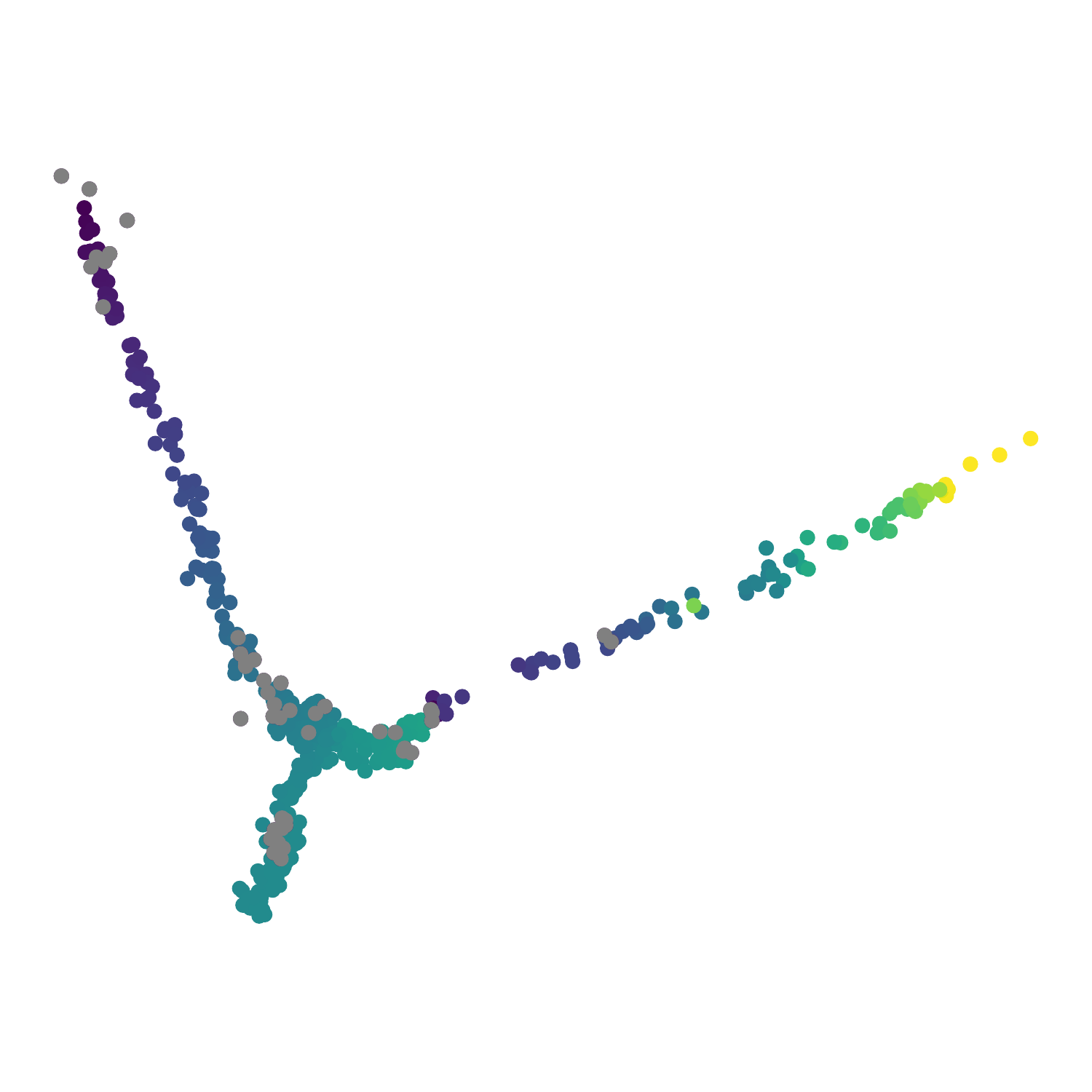}
    \end{subfigure} &
    \begin{subfigure}[!h]{.28\linewidth}
        \centering
        \includegraphics[width=\linewidth]{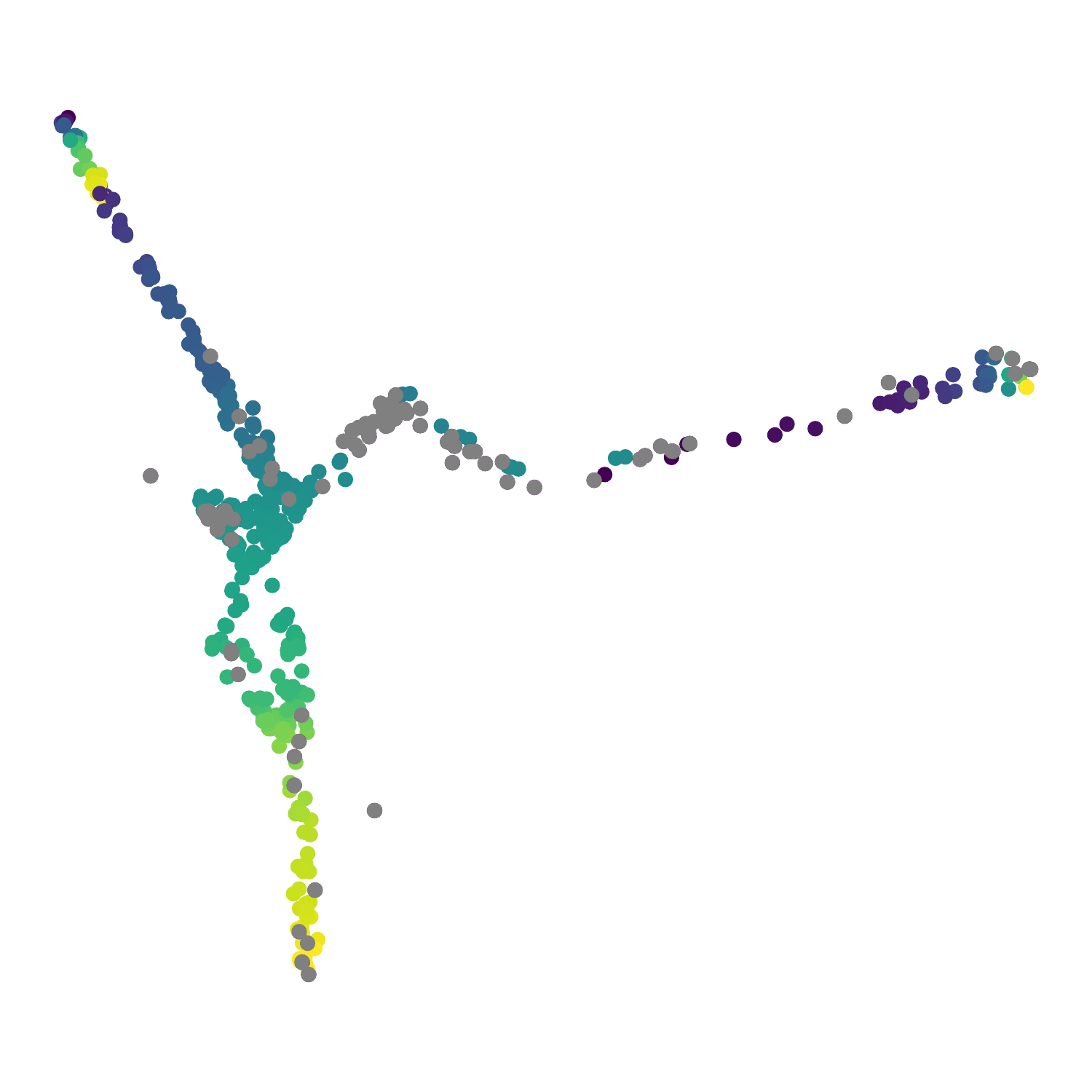}
    \end{subfigure} &
    \begin{subfigure}[!h]{.28\linewidth}
        \centering
        \includegraphics[width=\linewidth]{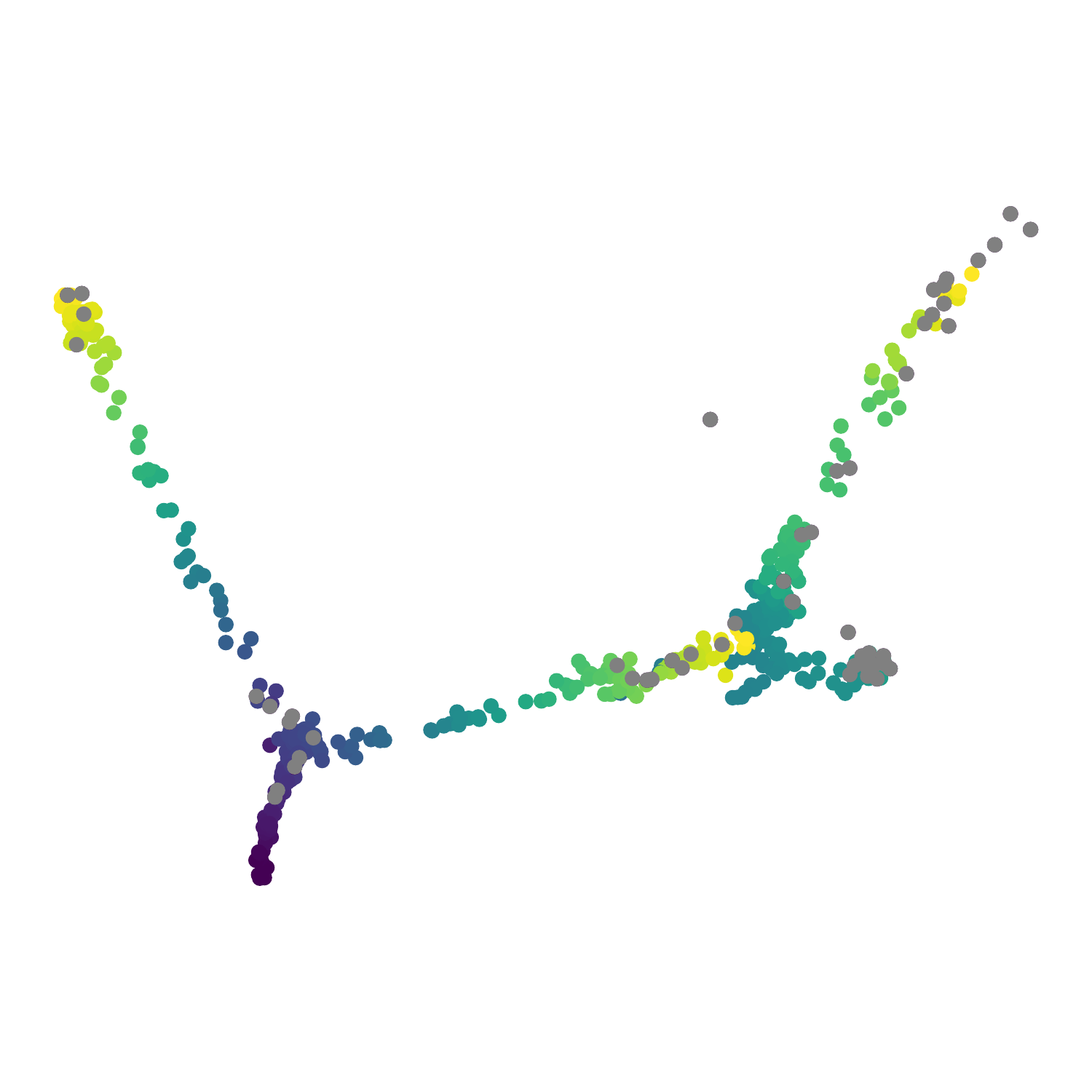} 
    \end{subfigure} \\
    \rotatebox[origin=c]{90}{\small CellPath} &
    \begin{subfigure}[!h]{.28\linewidth}
        \centering
        \includegraphics[width=\linewidth]{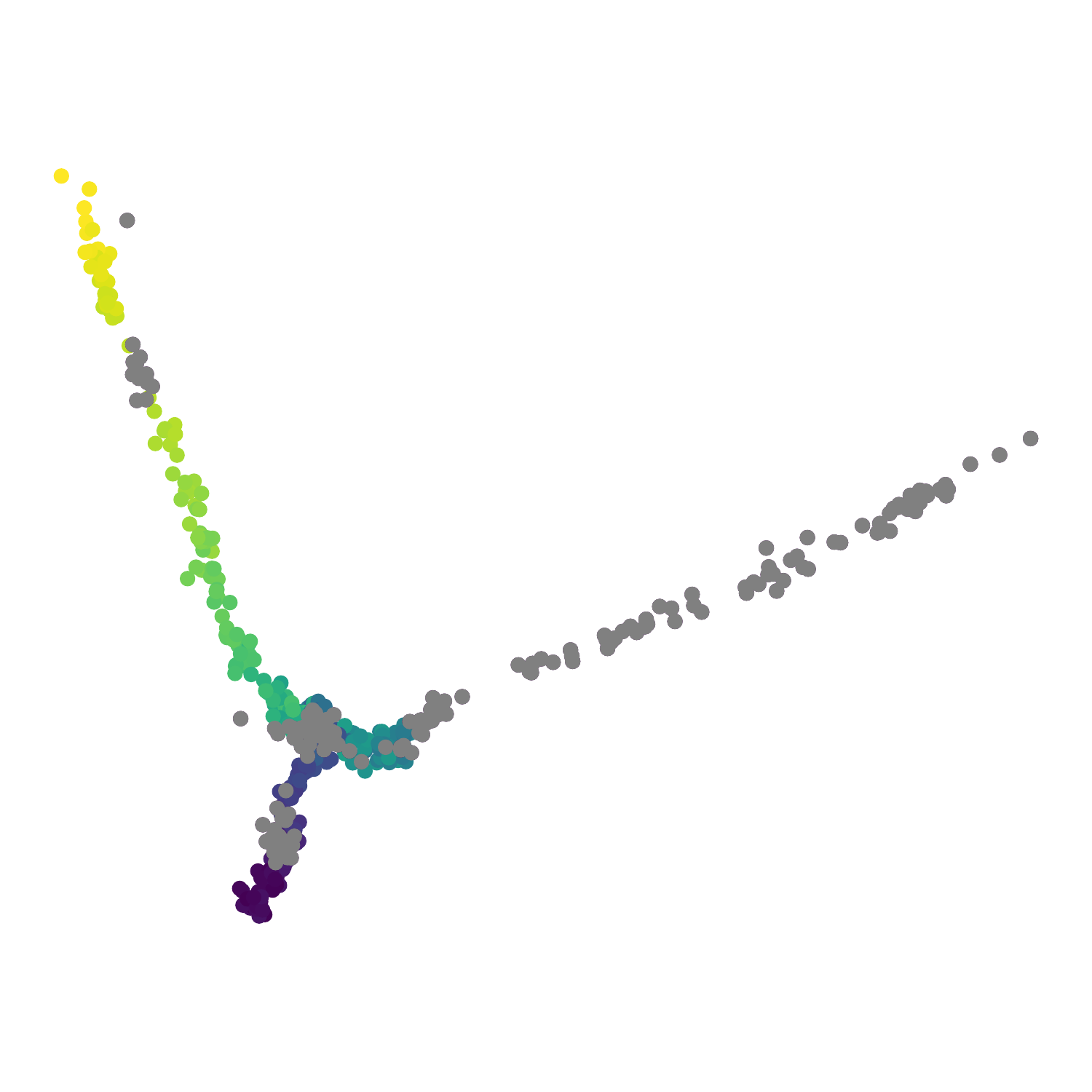}
    \end{subfigure} &
    \begin{subfigure}[!h]{.28\linewidth}
        \centering
        \includegraphics[width=\linewidth]{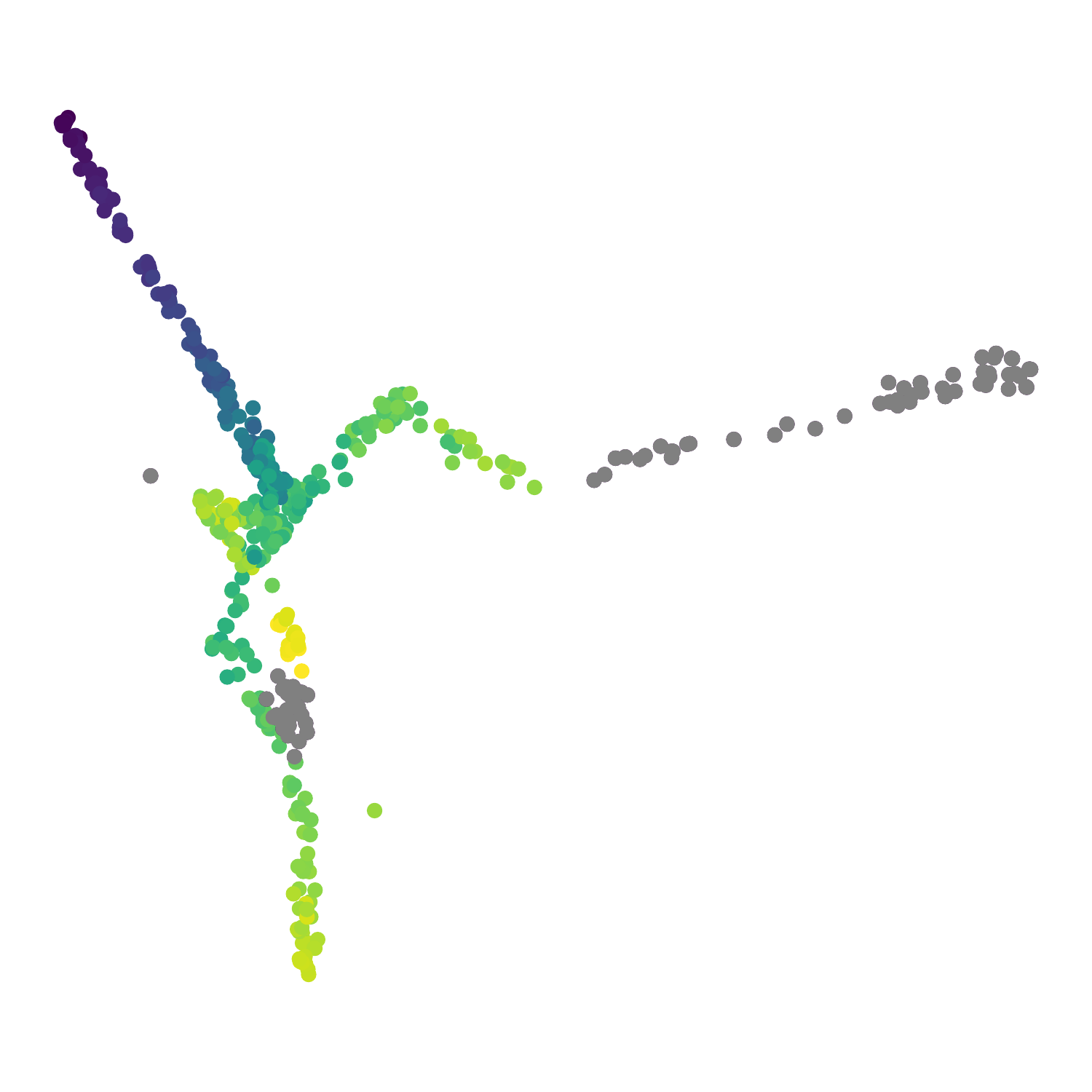}
    \end{subfigure} &
    \begin{subfigure}[!h]{.28\linewidth}
        \centering
        \includegraphics[width=\linewidth]{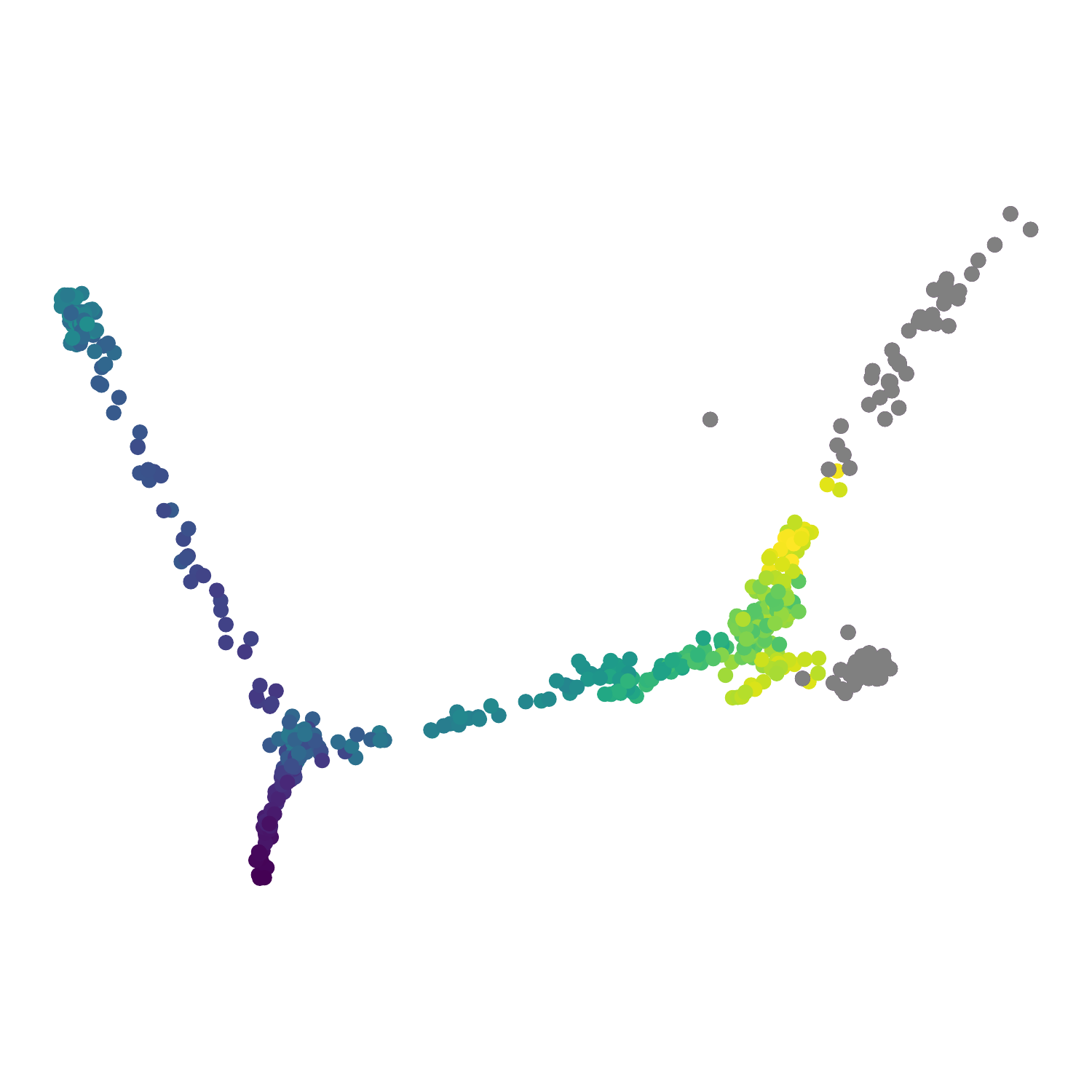} 
    \end{subfigure}
    \end{tabular}
    \vspace{.4cm}
    \caption{Pseudotime inferred by our method, Vetra, and CellPath for a bifurcating trajectory, a trifurcating trajectory, and a double-bifurcating trajectory (from left to right) simulated using the \texttt{dyngen} library~\cite{robrecht_dyngen_2021}. The 2D visualizations are generated via PCA. Shades of dark blue correspond to earlier cell states, while shades of yellow correspond to later cell states. Modulo some reparameterization —  from cell density to arc-length, the pseudotime inferred by our method agrees very closely with the ground truth. In contrast, VeTra and CellPath fail to assign a pseudotime to a significant proportion of cells (in gray). VeTra appears moreover to be unable to correctly identify early-stage cells.}\label{fig:comparative}
\end{figure}

\begin{table}[!h]
    \centering
    \begin{tabular}{|cccc|}
        \hline
        Method & Bifurcating & Trifurcating & Double bifurcating \\
        \hline
        Ours & 88.6\% & 93.1\% & 92.4\% \\
        Vetra & 36.4\% & 57.2\% & 70.1\% \\
        CellPath & 44.8\% & 78.5\% & 74.4\% \\
        \hline
    \end{tabular}
    \caption{Average proportion of well-ordered cells.} \label{tab:comparative}
    \vspace{-1.3cm}
\end{table}

\subsection{Pancreatic endocrine cell dataset}
Finally, we apply our method to the RNA sequencing dataset of mouse pancreatic endocrine cells~\cite{bastidasponce_comprehensive_2019}, precisely to the corresponding built-in dataset integrated into the \texttt{scVelo} library~\cite{bergen_generalizing_2020}. This dataset corresponds to the differentiation of pancreatic progenitor cells into four types of mature endocrine cells: $\alpha$-cells, $\beta$-cells, $\delta$-cells, and $\varepsilon$-cells. We exclude certain cells -- ductal cells and endocrine progenitors with low Ngn3 levels -- from the original dataset beforehand, as we know they undergo purely cyclic dynamics and therefore do not fit within the framework of tree inference. In fact, even the remaining cells do not follow a strict differentiation process, as it has been shown that $\alpha$-cells can originate from two different types of cell: pre-endocrine cells, but also $\varepsilon$-cells~\cite{yu_sequential_2021}. Our method can nevertheless be employed in this scenario: in practice, this mean that $\alpha$-cells are eventually distributed across two branches of the inferred tree (Fig.~\ref{fig:pancreas}). Although this distribution is not topologically accurate,  the inferred pseudotemporal order -- combined with cell annotations -- ultimately provides a successful prediction for the dual origin of $\alpha$-cells. On pure differentiation patterns (i.e., of pancreatic progenitor cells into $\beta$-cells), our method demonstrates, as expected, the most robust prediction. Note that setting the parameters of our method for this dataset (Table~\ref{tab:parameters}) proves to be significantly more difficult than for the simulated datasets due to the higher level of noise. In particular, choosing the appropriate scale $r$ to estimate the local dimension of the dataset during the preprocessing step (Section~\ref{sec:projection}) is particularly challenging, since the dataset as a whole is itself close to being $1$-dimensional, which, combined with the high level of noise, makes it difficult to clearly detect the transition from the local scale to the global scale. 
\begin{figure}[p]
    \centering
    \begin{tabular}{cc}
    \rotatebox[origin=c]{90}{\small Step 3} &
    \begin{subfigure}[!h]{.7\linewidth}
        \centering
        \includegraphics[width=\linewidth]{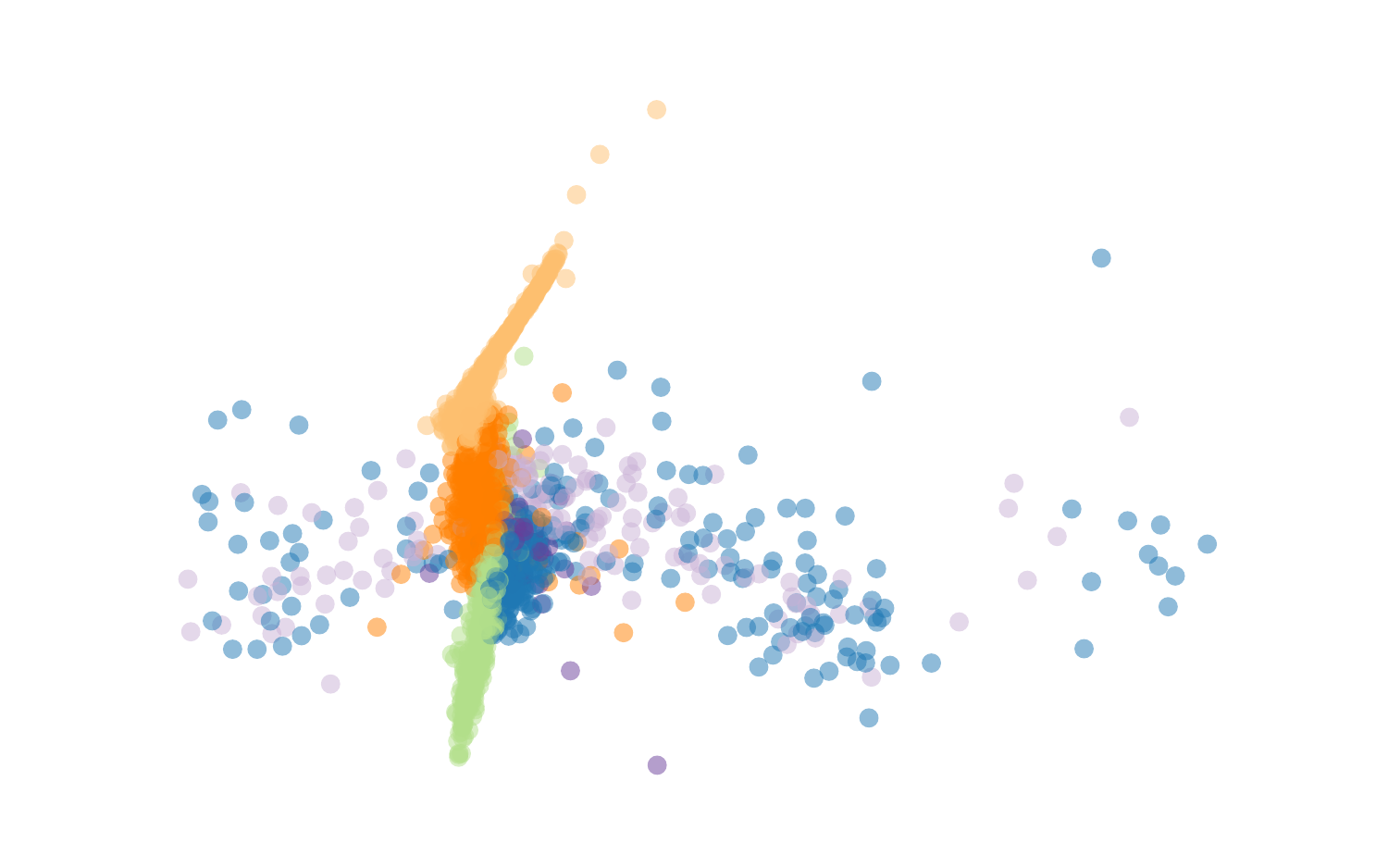}
    \end{subfigure} \\
    \rotatebox[origin=c]{90}{\small Step 4} &
    \begin{subfigure}[!h]{.7\linewidth}
        \centering
        \includegraphics[width=\linewidth]{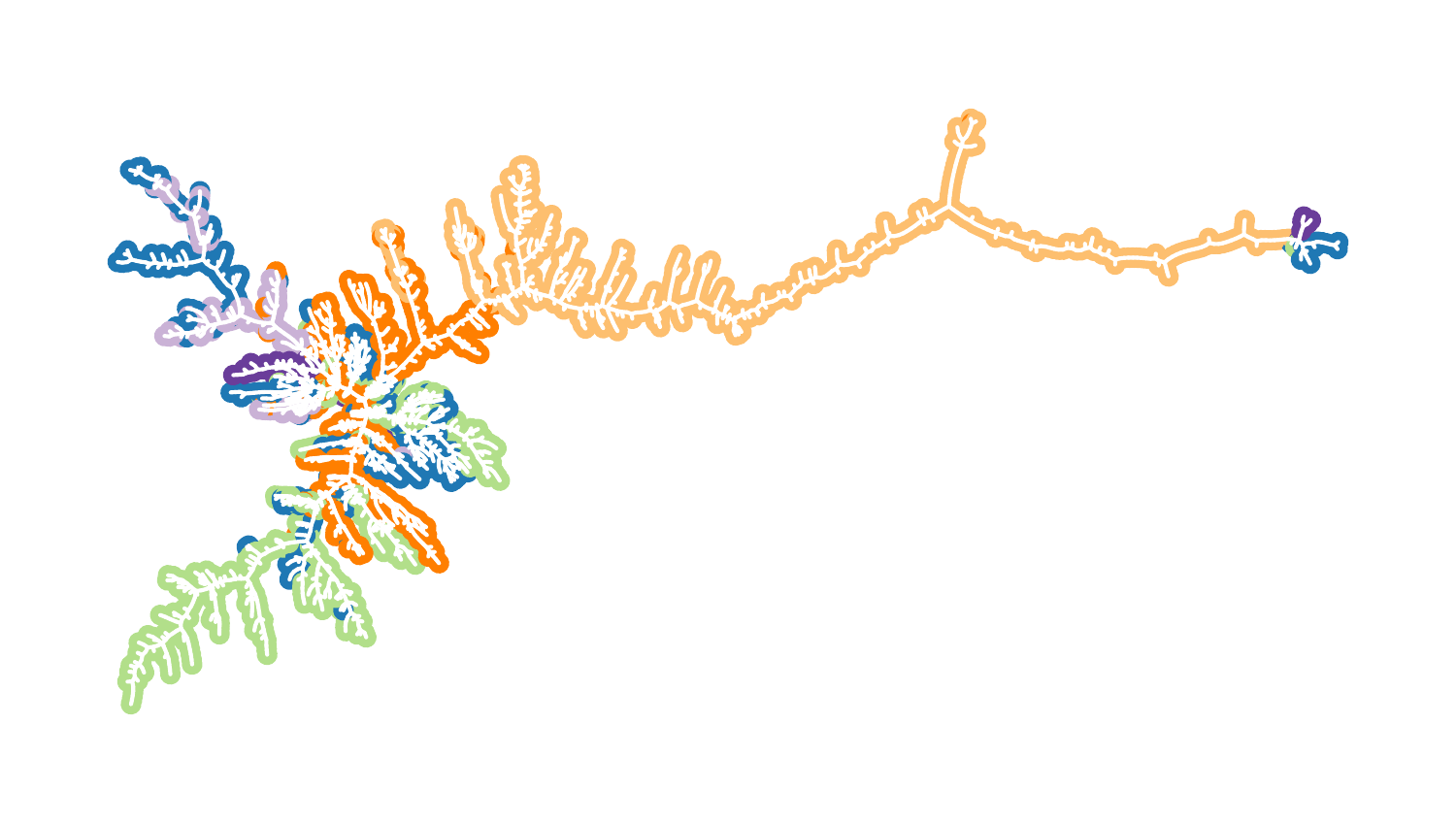}
    \end{subfigure}
    \end{tabular}
    \caption{Dissimilarity matrix (Step 3) and corresponding tree (Step 4) inferred by our method for the RNA sequencing dataset of mouse pancreatic endocrine cells~\cite{bastidasponce_comprehensive_2019}. The dissimilarity matrix is visualized using multidimensional scaling (MDS), and the tree layout is generated by the Kamada-Kawai algorithm. For improved readability, outliers have been removed from the visualization of the dissimilarity matrix. These outliers are still visible in the inferred tree, upstream of the root node. We identify two major branches in the tree. The lower branch accurately reconstructs the development of pre-endocrine cells (in dark orange) into $\beta$-cells (in green). Meanwhile, the upper branch corresponds to the development of $\varepsilon$-cells (in lilac) into $\alpha$-cells (in blue). A third, less major branch stems from the same node and consists of $\alpha$-cells originating directly from pre-endocrine cells. This branch is, however, more difficult to identify than the two first branches, as it is less elongated and also includes $\beta$-cells. In fact, younger $\alpha$-cells are challenging to separate from $\beta$-cells. In the visualization of the dissimilarity matrix, the path from $\varepsilon$-cells to $\alpha$-cells is split itself into two clusters: this is primarily an artifact of MDS itself. With the exception of outliers, $\delta$-cells (in purple) are well clustered in the inferred tree. However, because there are relatively few of them in the dataset, the branch they form is difficult to interpret as a meaningful branch, as it is no longer than minor branches attributable solely to noise.}\label{fig:pancreas}
\end{figure}

\begin{table}[!h]
    \begin{tabular}{|ccccc|}
        \hline
        Parameter & Bifurcating & Trifurcating & Double bifurcating & Pancreas\\
        \hline
        $d$ & 15 & 15 & 10 & 15 \\
        $t_{\text{diff}}$ & $2^{4.6}$ & $2^{8.2}$ & $2^{6.7}$ & $2^{5.2}$ \\
        $r$ & 4.5 & 7.5 & 8.5 & 11.0 \\
        $k$ & 40 & 40 & 40 & 90 \\
        $\sigma_x$ & 66.0 & 180.0 & 100.0 & 200.0 \\
        $\sigma_t$ & 1.21 & 1.16 & 1.06 & 0.85 \\
        \hline
    \end{tabular}
    \caption{Parameter settings for the different datasets.} \label{tab:parameters}
    \vspace{-1.3cm}
\end{table}



\section{Conclusion}\label{sec:conclusion}
In this article, we present a novel method for inferring differentiation trees from single-cell RNA sequencing (scRNA-seq) data that include both gene expression profiles and RNA velocity. Our method implements a distance-based tree inference approach, for which we introduce a new dissimilarity measure on scRNA-seq data, defined as the squared varifold distance between the integral curves of the RNA velocity field. We validate our method both with theoretical guarantees on the accuracy of the dissimilarity matrix to characterize the target tree and through a thorough evaluation of the method on simulated and real datasets. In particular, we show that our method significantly improves upon the state of the art on simulated benchmark datasets. In its current iteration, our method only addresses inferring the topology of the target differentiation tree, implying that all cells are uniformly distributed across pseudo-time. A natural extension of this work would therefore be to enable the inference of weighted differentiation trees. As part of a longer-term development of our method, we also want to tackle topologies other than trees, particularly cyclic topologies. To do this, we must adapt the distance-based inference algorithm. Incidentally, this should also allow us to handle a wider range of real-world datasets. 



\backmatter








\section*{Declarations}


\textbf{Funding}
This research is funded by the Deutsche Forschungsgemeinschaft (DFG, German Research Foundation) under Germany's Excellence Strategy – The Berlin Mathematics Research Center MATH+ (EXC-2046/1, project ID: 390685689) and part of this work was carried out at the Erwin Schrödinger International Institute for Mathematics and Physics (ESI) during the Thematic Programme on Infinite-dimensional Geometry: Theory and Applications. The authors have no conflict of interest to declare.

\medbreak
\noindent\textbf{Ethics approval and consent to participate}
Not applicable.

\medbreak
\noindent\textbf{Consent for publication}
All authors agreed with the content of this article and gave explicit consent to submit.

\medbreak
\noindent\textbf{Data availability}
The RNA sequencing dataset of mouse pancreatic endocrine cells is publicly available and can be downloaded from within the \texttt{scVelo} library at \url{https://github.com/theislab/scvelo_notebooks/tree/master/data/Pancreas}. The datasets simulated with the \texttt{dygen} library are available at \url{https://github.com/elodiemaignant/VeloTree}.

\medbreak
\noindent\textbf{Materials availability}
Not applicable.

\medbreak
\noindent\textbf{Code availability} The code is available at \url{https://github.com/elodiemaignant/VeloTree}.

\medbreak
\noindent\textbf{Author contribution}
Conceptualization: E.~Maignant, C.~von~Tycowicz;
Data curation: T.~Conrad, E.~Maignant, C.~von~Tycowicz;
Formal analysis: E.~Maignant, C.~von~Tycowicz, T.~Conrad;
Funding acquisition: C.~von~Tycowicz, T.~Conrad;
Investigation: E.~Maignant, C.~von~Tycowicz, T.~Conrad;
Methodology: E.~Maignant, C.~von~Tycowicz;
Project administration: C.~von~Tycowicz, T.~Conrad;
Resources: T.~Conrad;
Software: E.~Maignant, C.~von~Tycowicz;
Supervision: C.~von~Tycowicz, T.~Conrad;
Validation: E.~Maignant, C.~von~Tycowicz;
Visualization: E.~Maignant, C.~von~Tycowicz;
Writing – original draft: E.~Maignant, C.~von~Tycowicz;
Writing – review \& editing: E.~Maignant, C.~von~Tycowicz, T.~Conrad;







\begin{appendices}


\section{Proofs}\label{app:proofs}

\begin{proof}[Proof of lemma \ref{lemma:1}]
First, notice that all the quantities involved in Equation \ref{eq:L1} are positive. Therefore, we only need to prove that the quotient of the left-hand side scalar product by the right-hand side sum is bounded from above by a function that converges to $0$ when $\sigma_x$ does. 
\medbreak 
Let us start with bounding the left-hand side from above. Let $\varepsilon > 0$ and let $x_{j, \varepsilon}^- \in \gamma_{x_i, x_j}$ and $x_{j, \varepsilon}^+ \in \gamma_{x_j, x_k}$ such that
\begin{equation*}
    \ell(\gamma_{x_{j, \varepsilon}^-, x_j})=\ell(\gamma_{x_j, x_{j, \varepsilon}^+}) = \varepsilon.
\end{equation*} 
Thanks to the additivity of varifolds, we can decompose the left-hand side scalar product as follows 
\begin{align*}
    \langle \mu_{\gamma_{x_i, x_j}}, \mu_{\gamma_{x_j, x_k}} \rangle_{W^\ast} 
    = &\: \langle \mu_{\gamma_{x_{j, \varepsilon}^-, x_j}} , \mu_{\gamma_{x_j, x_{j, \varepsilon}^+}} \rangle_{W^\ast} \\
    &+ \langle \mu_{\gamma_{x_{j, \varepsilon}^-, x_j}} , \mu_{\gamma_{x_{j, \varepsilon}^+, x_k}} \rangle_{W^\ast} \\
    &+ \langle \mu_{\gamma_{x_i, x_{j, \varepsilon}^-}}, \mu_{\gamma_{x_j, x_{j, \varepsilon}^+}} \rangle_{W^\ast} \\
    &+ \langle \mu_{\gamma_{x_i, x_{j, \varepsilon}^-}}, \mu_{\gamma_{x_{j, \varepsilon}^+, x_k}} \rangle_{W^\ast}.
\end{align*}
Leveraging Assumption \ref{A1}, we derive then the following inequality 
\begin{align*}
    \langle \mu_{\gamma_{x_i, x_j}}, \mu_{\gamma_{x_j, x_k}} \rangle_{W^\ast} 
    \leq &\: \varepsilon^2 \\
    &+ \ell(\gamma_{x_j, x_k}) \varepsilon e^{-\|x_j - x_{j, \varepsilon}^+\|^2 / \sigma_x^2} \\
    &+ \ell(\gamma_{x_i, x_j}) \varepsilon e^{-\|x_{j, \varepsilon}^- - x_j\|^2 / \sigma_x^2} \\
    &+ \ell(\gamma_{x_i, x_j}) \ell(\gamma_{x_j, x_k}) e^{-\|x_{j, \varepsilon}^- - x_{j, \varepsilon}^+\|^2 / \sigma_x^2}.
\end{align*}
Now, because the paths are assumed to be smooth and regular, we have the following result on the length of an infinitesimal arc
\begin{equation*}
    \forall x,y \in \gamma_{x_i, x_k}, \: \|x -y\| \underset{x \to y}{\sim} \ell(\gamma_{x, y}).
\end{equation*}
The previous upper bound is then of the form
\begin{align*}
    \langle \mu_{\gamma_{x_i, x_j}}, \mu_{\gamma_{x_j, x_k}} \rangle_{W^\ast} 
    &\leq \varepsilon^2 + C_1 \varepsilon e^{-\frac{\varepsilon^2 + o(\varepsilon^2)}{\sigma_x^2}} + C_2e^{-4\frac{\varepsilon^2 + o(\varepsilon^2)}{\sigma_x^2}}
\end{align*}
where $C_1$ and $C_2$ do not depend on either $\varepsilon$, $\sigma_x$ or $\sigma_t$. 
\medbreak  
Then, let us bound the two norms in the right-hand side from below. For any $0 < \rho < \min(\ell(\gamma_{x_i, x_j}), \ell(\gamma_{x_j, x_k}))$, we can write
\begin{equation*}
    \|\mu_{\gamma_{x_i, x_j}}\|_{W^\ast}^2 \geq \iint_{\gamma_{x_i, x_j}^2}\mathbbm{1}_{\{y\in B(x, \rho)\}}e^{-\frac{\|x - y\|^2}{\sigma_x^2}}e^{-\frac{\|\vec{t}(x) - \vec{t}(y)\|^2}{\sigma_t^2}}d\ell(x)d\ell(y)
\end{equation*}
The curves $\gamma_{x_i, x_j}$ and $\gamma_{x_j, x_k}$ are assumed to be orientable, that is $x \mapsto \vec{t}(x)$ is a smooth map on both domains. Therefore it is $K$-Lipschitz for some $K>0$ and we have
\begin{equation*}
    \|\mu_{\gamma_{x_i, x_j}}\|_{W^\ast}^2 \geq \iint_{\gamma_{x_i, x_j}^2}\mathbbm{1}_{\{y\in B(x, \rho)\}}e^{-\frac{\rho^2}{\sigma_x^2}}e^{-\frac{K^2\rho^2}{\sigma_t^2}}d\ell(x)d\ell(y)\geq C_3\rho e^{-\frac{\rho^2}{\sigma_x^2}}e^{-K^2\frac{\rho^2}{\sigma_t^2}} 
\end{equation*}
and similarly $\|\mu_{\gamma_{x_j, x_k}}\|_{W^\ast}^2 \geq C_4\rho e^{-\frac{\rho^2}{\sigma_x^2}}e^{-K^2\frac{\rho^2}{\sigma_t^2}}$ where $C_3$ and $C_4$ do not depend on either $\rho$, $\sigma_x$ or $\sigma_t$.
\medbreak
Finally, let us fix $\varepsilon=\sigma_x^p$ for some $\frac{1}{2} < p < 1$ and $\rho=\sigma_x$. Then we obtain an upper bound on the quotient that converges to $0$ when $\sigma_x$ converges to $0$ with the condition that $\sigma_x \asymp \sigma_t$. \qedsymbol
\end{proof}

\begin{proof}[Proof of lemma \ref{lemma:2}]
Again, notice that all the quantities involved are positive. Therefore, we only need to prove that the quotient of the left-hand side scalar product by the right-hand side sum is bounded from above by a function that converges to $0$ when $\sigma_t$ does. 
\medbreak    
Let us bound the left-hand side from above. Let $\varepsilon > 0$ such that $B(x_i, \varepsilon) \subset U_i$ and let $x_{i, \varepsilon}^L \in \gamma_{x_i, x_j}$ and $x_{i, \varepsilon}^R \in \gamma_{x_i, x_k}$ such that
\begin{equation*}
    \ell(\gamma_{x_i, x_{i, \varepsilon}^L})=\ell(\gamma_{x_i, x_{i, \varepsilon}^R}) = \varepsilon.
\end{equation*} 
We have that
\begin{align*}
    \langle \mu_{\gamma_{x_i, x_j}}, \mu_{\gamma_{x_j, x_k}} \rangle_{W^\ast} 
    = &\: \langle \mu_{\gamma_{x_i, x_{i, \varepsilon}^L}} , \mu_{\gamma_{x_i, x_{i, \varepsilon}^R}} \rangle_{W^\ast} \\
    &+ \langle \mu_{\gamma_{x_i, x_{i, \varepsilon}^L}} , \mu_{\gamma_{x_{i, \varepsilon}^R, x_k}} \rangle_{W^\ast} \\
    &+ \langle \mu_{\gamma_{x_{i, \varepsilon}^L, x_j}}, \mu_{\gamma_{x_i, x_{i, \varepsilon}^R}} \rangle_{W^\ast} \\
    &+ \langle \mu_{\gamma_{x_{i, \varepsilon}^L, x_j}}, \mu_{\gamma_{x_{j, \varepsilon}^r, x_k}} \rangle_{W^\ast}.
\end{align*}
Leveraging Assumption \ref{A2}, we derive the following inequality
\begin{align*}
    \langle \mu_{\gamma_{x_i, x_j}}, \mu_{\gamma_{x_j, x_k}} \rangle_{W^\ast} 
    \leq &\: \varepsilon^2 \\
    &+ \ell(\gamma_{x_i, x_k}) \varepsilon e^{-\|\vec{t}(x_i) - \vec{t}(x_{i, \varepsilon}^R)\|^2 / \sigma_t^2} \\
    &+ \ell(\gamma_{x_i, x_j}) \varepsilon e^{-\|\vec{t}(x_{i, \varepsilon}^L) - \vec{t}(x_i)\|^2 / \sigma_t^2} \\
    &+ \ell(\gamma_{x_i, x_j}) \ell(\gamma_{x_i, x_k}) e^{-\|\vec{t}(x_{i, \varepsilon}^L) - \vec{t}(x_{i, \varepsilon}^R)\|^2 / \sigma_t^2}.
\end{align*}
Leveraging then Assumption \ref{A3}, we obtain a lower bound of the form
\begin{align*}
    \langle \mu_{\gamma_{x_i, x_j}}, \mu_{\gamma_{x_j, x_k}} \rangle_{W^\ast} 
    &\leq \varepsilon^2 + C_1\varepsilon e^{-\frac{\varepsilon^{2a}}{\sigma_t^2}} + C_2e^{-4\frac{\varepsilon^{2a}}{\sigma_t^2}}
\end{align*}
where $C_1$ and $C_2$ do not depend on either $\varepsilon$, $\sigma_x$ or $\sigma_t$.
\medbreak
Then, let us bound the two norms in the right-hand side from below. The map $x \to \vec{t}(x)$ is $K$-Lipschitz on $\gamma_{x_i, x_j}$ and $\gamma_{x_i, x_k}$ for some $K>0$, and for any $0 < \rho < \min(\ell(\gamma_{x_i, x_j}), \ell(\gamma_{x_i, x_k}))$, we then have
\begin{equation*}
    \|\mu_{\gamma_{x_i, x_j}}\|_{W^\ast}^2 \geq \iint_{\gamma_{x_i, x_j}^2}\mathbbm{1}_{\{y\in B(x, \rho)\}}e^{-\frac{\rho^2}{\sigma_x^2}}e^{-\frac{K^2\rho^2}{\sigma_t^2}}d\ell(x)d\ell(y)\geq C_3\rho e^{-\frac{\rho^2}{\sigma_x^2}}e^{-K^2\frac{\rho^2}{\sigma_t^2}}
\end{equation*}
and similarly $\|\mu_{\gamma_{x_i, x_k}}\|_{W^\ast}^2 \geq C_4re^{-\frac{\rho^2}{\sigma_x^2}}e^{-K^2\frac{\rho^2}{\sigma_t^2}}$ where $C_3$ and $C_4$ do not depend on either $\rho$, $\sigma_x$ or $\sigma_t$.
\medbreak
Finally, let us fix $\varepsilon=\sigma_x^p$ for some $\frac{1}{2} < p < \frac{1}{a}$ and $\rho=\sigma_x$. Then we obtain an upper bound on the quotient that converges to $0$ when $\sigma_x$ converges to $0$ with the condition that $\sigma_x \asymp \sigma_t$. \qedsymbol
\end{proof}




\end{appendices}


\bibliography{main}

@inproceedings{kaltenmark_general_2017,
    title={A general framework for curve and surface comparison and registration with oriented varifolds},
    author={Kaltenmark, Irene and Charlier, Benjamin and Charon, Nicolas},
    booktitle={Proceedings of the IEEE conference on computer vision and pattern recognition},
    pages={3346--3355},
    year={2017}
}

@article{manno_rna_2018,
    author = {La Manno, G. and Soldatov, R. and Zeisel, A. and et al.},
    volume = {560},
    year = {2018},
    title = {RNA velocity of single cells},
    journal = {Nature},
    doi={10.1038/s41586-018-0414-6}
}

@article{gorin_rna_2022,
    title={RNA velocity unraveled},
    author={Gorin, Gennady and Fang, Meichen and Chari, Tara and Pachter, Lior},
    journal={PLOS Computational Biology},
    volume={18},
    number={9},
    pages={e1010492},
    year={2022},
    publisher={Public Library of Science San Francisco, CA USA},
    doi={10.1371/journal.pcbi.1010492}
}

@article{charlier_fshape_2017,
    title={The fshape framework for the variability analysis of functional shapes},
    author={Charlier, Benjamin and Charon, Nicolas and Trouv{\'e}, Alain},
    journal={Foundations of Computational Mathematics},
    volume={17},
    pages={287--357},
    year={2017},
    publisher={Springer},
    doi={10.1007/s10208-015-9288-2}
}

@article{pardi_combinatorics_2012,
    title={Combinatorics of distance-based tree inference},
    author={Pardi, Fabio and Gascuel, Olivier},
    journal={Proceedings of the National Academy of Sciences},
    volume={109},
    number={41},
    pages={16443--16448},
    year={2012},
    publisher={National Academy of Sciences},
    doi={10.1073/pnas.1118368109}
}

@article{zhang_cellpath_2021,
    title={Inference of high-resolution trajectories in single-cell RNA-seq data by using RNA velocity},
    author={Zhang, Ziqi and Zhang, Xiuwei},
    journal={Cell Reports Methods},
    volume={1},
    number={6},
    year={2021},
    publisher={Elsevier},
    doi={10.1016/j.crmeth.2021.100095}
}

@article{lange_cellrank_2022,
    title={CellRank for directed single-cell fate mapping},
    author={Lange, Marius and Bergen, Volker and Klein, Michal and Setty, Manu and Reuter, Bernhard and Bakhti, Mostafa and Lickert, Heiko and Ansari, Meshal and Schniering, Janine and Schiller, Herbert B and others},
    journal={Nature methods},
    volume={19},
    number={2},
    pages={159--170},
    year={2022},
    publisher={Nature Publishing Group US New York},
    doi={10.1038/s41592-021-01346-6}
}

@phdthesis{glaunes_these_2005,
    title={Transport par difféomorphismes de points, de mesures et de courants pour la comparaison de formes et l'anatomie numérique},
    author={Glaunès, Joan Alexis},
    year={2005}
}

@article{saelens_comparison_2019,
    title={A comparison of single-cell trajectory inference methods},
    author={Saelens, Wouter and Cannoodt, Robrecht and Todorov, Helena and Saeys, Yvan},
    journal={Nature biotechnology},
    volume={37},
    number={5},
    pages={547--554},
    year={2019},
    publisher={Nature Publishing Group US New York},
    doi={10.1038/s41587-019-0071-9}
}

@article{kalaghatgi_family_2016,
    author = {Kalaghatgi, Prabhav and Pfeifer, Nico and Lengauer, Thomas},
    title = {Family-Joining: A Fast Distance-Based Method for Constructing Generally Labeled Trees},
    journal = {Molecular Biology and Evolution},
    volume = {33},
    number = {10},
    pages = {2720-2734},
    year = {2016},
    doi = {10.1093/molbev/msw123}
}

@article{pereira_note_1969,
    title={A note on the tree realizability of a distance matrix},
    author={Pereira, JMS Simoes},
    journal={Journal of combinatorial theory},
    volume={6},
    number={3},
    pages={303--310},
    year={1969},
    publisher={Elsevier},
    doi={10.1016/S0021-9800(69)80092-X}
}

@Inbook{shan_diffusion_2022,
    author="Shan, Shan
    and Daubechies, Ingrid",
    editor="Flandrin, Patrick
    and Jaffard, St{\'e}phane
    and Paul, Thierry
    and Torresani, Bruno",
    title="Diffusion Maps: Using the Semigroup Property for Parameter Tuning",
    bookTitle="Theoretical Physics, Wavelets, Analysis, Genomics: An Indisciplinary Tribute to Alex Grossmann",
    year="2022",
    publisher="Springer International Publishing",
    address="Cham",
    pages="409--424",
    doi="10.1007/978-3-030-45847-8_18",
}

@article{coifman_diffusion_2026,
    title={Diffusion maps},
    author={Coifman, Ronald R and Lafon, St{\'e}phane},
    journal={Applied and computational harmonic analysis},
    volume={21},
    number={1},
    pages={5--30},
    year={2006},
    publisher={Elsevier},
    doi={10.1016/j.acha.2006.04.006}
}

@article{weng_vetra_2021,
    author = {Weng, Guangzheng and Kim, Junil and Won, Kyoung Jae},
    title = {VeTra: a tool for trajectory inference based on RNA velocity},
    journal = {Bioinformatics},
    volume = {37},
    number = {20},
    pages = {3509-3513},
    year = {2021},
    doi = {10.1093/bioinformatics/btab364},
}

@article{robrecht_dyngen_2021,
    author = {Robrecht Cannoodt and Wouter Saelens and Louise Deconinck and Yvan Saeys},
    title = {Spearheading future omics analyses using dyngen, a multi-modal simulator of single cells},
    journal = {Nature Communications},
    year = {2021},
    doi = {10.1038/s41467-021-24152-2},
}

@article{bastidasponce_comprehensive_2019,
    author = {Bastidas-Ponce, Aimée and Tritschler, Sophie and Dony, Leander and Scheibner, Katharina and Tarquis-Medina, Marta and Salinno, Ciro and Schirge, Silvia and Burtscher, Ingo and Böttcher, Anika and Theis, Fabian J. and Lickert, Heiko and Bakhti, Mostafa and Klein, Allon and Treutlein, Barbara},
    title = {Comprehensive single cell mRNA profiling reveals a detailed roadmap for pancreatic endocrinogenesis},
    journal = {Development},
    volume = {146},
    number = {12},
    pages = {dev173849},
    year = {2019},
    doi = {10.1242/dev.173849},
}

@InProceedings{maignant_tree_2025,
    author="Maignant, Elodie
    and Conrad, Tim
    and von Tycowicz, Christoph",
    editor="Nielsen, Frank
    and Barbaresco, Fr{\'e}d{\'e}ric",
    title="Tree Inference with Varifold Distances",
    booktitle="Geometric Science of Information",
    year="2025",
    publisher="Springer Nature Switzerland",
    address="Cham",
    pages="290--299",
    doi="10.1007/978-3-032-03921-7_30"
}

@article{qiu_monocle_2021,
    title={Reversed graph embedding resolves complex single-cell trajectories},
    author={Qiu, Xiaojie and Mao, Qi and Tang, Ying and Wang, Li and Chawla, Raghav and Pliner, Hannah A and Trapnell, Cole},
    journal={Nature methods},
    volume={14},
    number={10},
    pages={979–-982},
    year={2017},
    publisher={Nature Publishing Group UK London},
    doi = {10.1038/nmeth.4402},
}

@article{street_slingshot_2018,
    title={Slingshot: cell lineage and pseudotime inference for single-cell transcriptomics},
    author={Street, Kelly and Risso, Davide and Fletcher, Russell B and Das, Diya and Ngai, John and Yosef, Nir and Purdom, Elizabeth and Dudoit, Sandrine},
    journal={BMC genomics},
    volume={19},
    number={1},
    pages={477},
    year={2018},
    publisher={Springer},
    doi = {10.1186/s12864-018-4772-0},
}

@article{setty_palantir_2019,
    title={Characterization of cell fate probabilities in single-cell data with Palantir},
    author={Setty, Manu and Kiseliovas, Vaidotas and Levine, Jacob and Gayoso, Adam and Mazutis, Linas and Pe’Er, Dana},
    journal={Nature biotechnology},
    volume={37},
    number={4},
    pages={451--460},
    year={2019},
    publisher={Nature Publishing Group US New York},
    doi={10.1038/s41587-019-0068-4}
}

@article{lefort_fastme_2015,
    author = {Lefort, Vincent and Desper, Richard and Gascuel, Olivier},
    title = {FastME 2.0: A Comprehensive, Accurate, and Fast Distance-Based Phylogeny Inference Program},
    journal = {Molecular Biology and Evolution},
    volume = {32},
    number = {10},
    pages = {2798-2800},
    year = {2015},
    doi = {10.1093/molbev/msv150},
}

@article{kukurba_rna_2015,
    title={RNA sequencing and analysis},
    author={Kukurba, Kimberly R and Montgomery, Stephen B},
    journal={Cold Spring Harbor Protocols},
    volume={2015},
    number={11},
    pages={pdb--top084970},
    year={2015},
    publisher={Cold Spring Harbor Laboratory Press},
    doi={10.1101/pdb.top084970 }
}

@article{sokal_upgma_1958,
  author       = {Sokal, Robert R. and Michener, Charles D. (Charles Duncan)},
  title        = {A Statistical Method for Evaluating Systematic Relationships},
  journal      = {The University of Kansas science bulletin},
  year         = {1958},
  volume       = {38},
  number       = {22},
  doi          = {10.5281/zenodo.16435757},
}

@article{saitou_nj_1987,
    author = {Saitou, N and Nei, M},
    title = {The neighbor-joining method: a new method for reconstructing phylogenetic trees.},
    journal = {Molecular Biology and Evolution},
    volume = {4},
    number = {4},
    pages = {406-425},
    year = {1987},
    doi = {10.1093/oxfordjournals.molbev.a040454},
}

@article{ahn_eigenvalue_2013,
    author = {Ahn, Seung C. and Horenstein, Alex R.},
    title = {Eigenvalue Ratio Test for the Number of Factors},
    journal = {Econometrica},
    volume = {81},
    number = {3},
    pages = {1203-1227},
    doi = {10.3982/ECTA8968},
    year = {2013},
}

@article{watson_smooth_1964,
    author = {Geoffrey S. Watson},
    journal = {Sankhyā: The Indian Journal of Statistics, Series A (1961-2002)},
    number = {4},
    pages = {359--372},
    publisher = {Springer},
    title = {Smooth Regression Analysis},
    volume = {26},
    year = {1964},
    url = {http://www.jstor.org/stable/25049340}
}

@article{nadaraya_estimating_1964,
    author = {Nadaraya, E. A.},
    title = {On Estimating Regression},
    journal = {Theory of Probability \& Its Applications},
    volume = {9},
    number = {1},
    pages = {141-142},
    year = {1964},
    doi = {10.1137/1109020},
}

@InProceedings{networkx_2008,
  author = {Aric A. Hagberg and Daniel A. Schult and Pieter J. Swart},
  title = {Exploring Network Structure, Dynamics, and Function using NetworkX},
  booktitle = {Proceedings of the 7th Python in Science Conference},
  pages = {11 - 15},
  address = {Pasadena, CA USA},
  year = {2008},
  editor = {Ga\"el Varoquaux and Travis Vaught and Jarrod Millman},
  doi = {10.25080/TCWV9851},
}

@article{scikit-learn_2011,
  title={Scikit-learn: Machine Learning in {P}ython},
  author={Pedregosa, F. and Varoquaux, G. and Gramfort, A. and Michel, V.
          and Thirion, B. and Grisel, O. and Blondel, M. and Prettenhofer, P.
          and Weiss, R. and Dubourg, V. and Vanderplas, J. and Passos, A. and
          Cournapeau, D. and Brucher, M. and Perrot, M. and Duchesnay, E.},
  journal={Journal of Machine Learning Research},
  volume={12},
  pages={2825--2830},
  year={2011}
}

@article{yu_sequential_2021,
  title={Sequential progenitor states mark the generation of pancreatic endocrine lineages in mice and humans},
  author={Yu, Xin-Xin and Qiu, Wei-Lin and Yang, Liu and Wang, Yan-Chun and He, Mao-Yang and Wang, Dan and Zhang, Yu and Li, Lin-Chen and Zhang, Jing and Wang, Yi and others},
  journal={Cell research},
  volume={31},
  number={8},
  pages={886--903},
  year={2021},
  publisher={Springer Singapore Singapore},
  doi={10.1038/s41422-021-00486-w}
}

@article{bergen_generalizing_2020,
    title = {Generalizing RNA velocity to transient cell states through dynamical modeling},
    volume = {38},
    DOI = {10.1038/s41587-020-0591-3},
    number = {12},
    journal = {Nature Biotechnology},
    publisher = {Springer Science and Business Media LLC},
    author = {Bergen, Volker and Lange, Marius and Peidli, Stefan and Wolf, F. Alexander and Theis, Fabian J.},
    year = {2020},
    month = aug,
    pages = {1408-1414}
}

\end{document}